\def\BE{\mathbb{E}}
\def\BP{\mathbb{P}}
\def\IIFF{\mathbb{IF}}
\def\IF{\mathsf{IF}}
\def\log{\mathsf{log}}
\def\var{\mathsf{var}}
\def\cov{\mathsf{cov}}
\def\Bias{\mathsf{Bias}}
\def\H{\mathsf{H}}
\def\se{\mathsf{s.e.}}
\def\2{$\text{2}^{\text{nd}}$-order influence function $U$-statistics}
\def\3{$\text{3}^{\text{rd}}$-order influence function $U$-statistics}
\def\4{$\text{4}^{\th}$-order influence function $U$-statistics}
\def\5{$\text{5}^{\th}$-order influence function $U$-statistics}
\def\th{\mathsf{th}}
\def\EB{\mathsf{EB}}
\def\TB{\mathsf{TB}}
\def\CSBias{\mathsf{CSBias}}
\def\P{\mathsf{P}}
\def\zbar{\bar{\mathsf{z}}}
\def\Holder{\text{H\"{o}lder}}
\def\KBW{\text{KBW}}
\def\fbar{\bar{f}}
\numberwithin{equation}{section}
\theoremstyle{definition}
\newtheorem{thm}{Theorem}[section]
\newtheorem{lem}[thm]{Lemma}
\newtheorem{definition}[thm]{Definition}
\newtheorem{proposition}[thm]{Proposition}
\newtheorem*{theorem*}{Proposition}
\theoremstyle{remark}
\newtheorem{rem}[thm]{Remark}
\newenvironment{customthm}[1]
  {\innercustomthm}
  {\endinnercustomthm}
\providecommand{\customgenericname}{}
\newcommand{\newcustomtheorem}[2]{  \newenvironment{#1}[1]
  {   \renewcommand\customgenericname{#2}   \renewcommand\theinnercustomgeneric{##1}   \innercustomgeneric
  }
  {\endinnercustomgeneric}
}
\def\KL{\mathsf{KL}}
\def\calD{\mathcal{D}}
\def\calM{\mathcal{M}}
\begin{document}

\begin{frontmatter}

\title{Rejoinder: On nearly assumption-free tests of nominal confidence interval coverage for causal parameters estimated by machine learning}

\author{\fnms{Lin} \snm{Liu}\corref{}\ead[label=e1]{linliu@sjtu.edu.cn}\thanksref{t1}}
\thankstext{t1}{Institute of Natural Sciences and School of Mathematical Sciences, Shanghai Jiao Tong University} 
\author{\fnms{Rajarshi} \snm{Mukherjee}\corref{}\ead[label=e2]{ram521@mail.harvard.edu}\thanksref{t2}}
\thankstext{t2}{Department of Biostatistics, Harvard T. H. Chan School of Public Health} 
\author{\fnms{James M.} \snm{Robins}\corref{}\ead[label=e3]{robins@hsph.harvard.edu}\thanksref{t3}}
\thankstext{t3}{Department of Epidemiology and Biostatistics, Harvard T. H. Chan School of Public Health} 


\end{frontmatter}

We thank the editors for this opportunity and the discussants \citet{kennedy2020discussion} (abbreviated as KBW in the sequel) for their insightful commentaries on our paper \citep{liu2020nearly} (abbreviated as LMR in the sequel).

\section{A brief introduction to higher order influence functions}
\label{sec:review}
We would like to start our rejoinder by responding to the philosophical comments in Section 6 of KBW's discussion before getting into the other more technical comments. In Section 6, KBW divide statistical procedures into structure-driven and methods-driven but also acknowledge that the boundary between these two categories is blurry. For example, even for the poster child of the methods-driven tools -- deep neural networks -- one common research direction is to prove some form of optimality or robustness under some assumptions, often quantified by smoothness, sparsity or other related complexity measures such as metric entropy \citep{schmidt2017nonparametric, hayakawa2020minimax, barron2018approximation}. 

The discussants then state that higher order influence function (HOIF) based methods are `structure-driven' because `they typically rely on carefully constructed series estimates' and achieve `better performance over appropriate \Holder{} spaces potentially at the expense of being more structure driven.' This statement misunderstands the motivation and goals of HOIF estimation. Our goal has always been to make HOIF fully methods-driven. However, before we reach this goal, difficult open problems remain to be solved. Until then, we have had to make restrictive assumptions to obtain sharp mathematical results -- these assumptions can make our methodology appear at least partly `structure-driven'.

The theory of HOIF is (simplifying somewhat) a theory based only on higher order scores of finite dimensional submodels. As a consequence, the theory by itself cannot quantify the rates of convergence of a HOIF estimator and thus the bias of a HOIF estimator without additional complexity reducing model assumptions, a central point we stressed throughout LMR. To be more concrete, for now let us restrict the attention to smooth nonlinear functionals $\psi (\theta)$ of a distribution $\P_{\theta}$ lying in an infinite dimensional model $\calM = \left\{ \P_{\theta}; \theta \in \Theta \right\}$ with a first order influence function $\IIFF_{1, \psi} (\theta)$ but (as is generally the case in infinite dimensional models) without $m$-th order influence functions for $m > 1$. Therefore, HOIF theory often considers finite $k = k(n)$-dimensional sieves $\calM_{sub, k} = \left\{ \P_{\theta}; \theta \in \Theta_{sub, k} \subset \Theta \right\}$ containing an initial training sample estimator $\hat{\theta}$, an associated projection map $\theta \mapsto \tilde{\theta}_{k}$ from $\Theta$ onto $\Theta_{sub, k}$ that is the identity for $\theta \in \Theta_{sub, k}$. The projection map defines a truncated parameter $\tilde{\psi}_{k} (\theta), \theta \in \Theta$ by $\tilde{\psi}_{k} (\theta) = \psi (\tilde{\theta}_{k} (\theta)), \theta \in \Theta$, which will typically have HOIFs of all orders because $\Theta_{sub, k}$ is finite dimensional. The theory of HOIF applied to the parameter $\tilde{\psi}_{k} (\theta)$ guarantees that $\{\tilde{\psi}_{k} (\hat{\theta}) + \BE_{\theta} [\IIFF_{m, \tilde{\psi}_{k}} (\hat{\theta})]\} - \tilde{\psi}_{k} (\theta) = O (\Vert \hat{\theta} - \theta \Vert^{m + 1})$ or, equivalently, 
\begin{eqnarray*}
\BE_{\theta} \left[ \hat{\psi}_{m, k} - \tilde{\psi}_{k} (\theta) \right] &\equiv& \EB_{\theta, m, k} (\hat{\psi}_{1}) = O \left( \left\Vert \hat{\theta} - \theta \right\Vert^{m + 1} \right) \\
\text{ where } \hat{\psi}_{m, k} &=& \psi (\hat{\theta}) + \IIFF_{m, \tilde{\psi}_{k}} (\hat{\theta}).
\end{eqnarray*}
Here $\hat{\psi}_{1} = \psi (\hat{\theta}) + \IIFF_{1, \tilde{\psi}_{k}} (\hat{\theta})$ is a doubly robust machine learning (DRML) estimator based on the first order influence function and $\IIFF_{m, \tilde{\psi}_{k}} (\hat{\theta}) = \IIFF_{1, \tilde{\psi}_{k}} (\hat{\theta}) - \sum_{j = 2}^{m} \IIFF_{j j, \tilde{\psi}_{k}} (\hat{\theta})$ where, under $\P_{\hat{\theta}}$, $\IIFF_{j j, \tilde{\psi}_{k}} (\hat{\theta}) \equiv \widehat{\IIFF}_{j j, k}$ is a $j$-th order $U$-statistic\footnote{Here we are using the same sign convention as in LMR, which reverses the sign conventions of \citet{robins2008higher}.}. Unless stated otherwise all expectations are conditional on the training sample. Thus $\BE_{\theta} [\hat{\psi}_{m, k} - \psi (\theta)] = \EB_{\theta, m, k} (\hat{\psi}_{1}) + \TB_{\theta, k} (\hat{\psi}_{1})$ with $\TB_{\theta, k} (\hat{\psi}_{1}) \equiv \tilde{\psi}_{k} (\theta) - \psi (\theta)$. Furthermore, it is often the case that $\var_{\theta} (\hat{\psi}_{m, k}) = O (k^{m - 1} / n^{m} \vee 1 / n)$. The above is pretty much the cornerstone of the theory of HOIF estimators. This theory involves no structural assumptions on components of $\theta$, such as smooth or sparse nuisance functions. As a consequence the theory is agnostic as to the rate at which $\Vert \hat{\theta} - \theta \Vert$ or $\TB_{\theta, k(n)} (\hat{\psi}_{1})$ converges to zero.

The above theory was introduced in Sections 2-3 of \cite{robins2008higher} before either \Holder{} smoothness assumptions or best approximating bases were introduced. However, we then went on to study models defined in terms of \Holder{} smoothness  to determine whether our abstract theory (just described) could be used to construct rate minimax estimators (it could) for a particular class of functionals under this well known infinite dimensional model. Under the \Holder{} model, we could determine the rate at which $\EB_{\theta, m, k} (\hat{\psi}_{1})$ and $\TB_{\theta, k} (\hat{\psi}_{1})$ converged to $0$ for different choices of the sequences $m = m(n), k = k(n)$, and parametric submodels $\calM_{sub, k}$. We could thus optimize $m(n), k(n)$, and $\calM_{sub, k(n)}$ and often obtained minimax rates under the \Holder{} model, when we did so.

Indeed, our theoretical work on HOIF since \cite{robins2008higher} and \cite{robins2017minimax} can be understood as having been solely directed toward the elimination of remaining structure-driven assumptions. As an example, consider the parameter $\psi (\theta) = \BE_{\theta} [\var_{\theta} [A | X]]$ with $A$ Bernoulli and $X$ high-dimensional with a continuous distribution. Then $\theta = (p, g)$ where $p(x) = \BE_{\theta} [A | X = x]$ and $g(x)$ is the density of $X$. A $k$-dimensional submodel $p(x; \theta_{k}) \equiv p_{\theta_{k}} (x)$ for $p(x)$ was chosen to be $\left\{ p_{\theta_{k}}; p_{\theta_{k}} (x)= \hat{p} (x) + \theta_{k}^{\top} \zbar_{k} (x)\right\}$ where $\zbar_{k} (x)$ is the vector of the first $k$ elements of a sequence of user-selected set of basis functions $\left\{ \zbar_{j} (x), j = 1, \ldots \right\}$. The corresponding projection map is $\tilde{\theta}_{k} = - \Omega_{k}^{-1} \BE_{\theta} [A (\hat{p}(X) - p(X))]$ with $\Omega_{k} \coloneqq \BE_{\theta} [\zbar_{k}(X) \zbar_{k}(X)^{\top}] \equiv \BE_{g} [\zbar_{k}(X) \zbar_{k}(X)^{\top}]$. The HOIFs $\IIFF_{m, \tilde{\psi}_{k}} (\theta)$ depended on $g$ through $\Omega_{k}$ which we estimated by $\BE_{\hat{g}} [\zbar_{k}(X) \zbar_{k}(X)^{\top}]$ with $\hat{g}$ an estimator of the density $g$. In the above papers, we used complexity reducing models (e.g. \Holder{} models) on both $p$ and $g$ to evaluate the rate of convergence of $\EB_{\theta, m, k} (\hat{\psi}_{1}) = O (\Vert \hat{p} - p \Vert^{2} \Vert \hat{g} - g \Vert^{m - 1})$ to zero. In the case in which $k = o(n)$, \cite{mukherjee2017semiparametric} introduced ``empirical'' HOIF estimators that eliminated the need to assume a complexity-reducing model on $g$. Instead they proposed estimating $\{\BE_{g} [\zbar_{k}(X) \zbar_{k}(X)^{\top}]\}^{-1}$ by the inverse sample Gram matrix $\{\widehat{\Omega}_{k}^{tr}\}^{-1} \equiv \{\BP_{n_{tr}} [\zbar_{k}(X) \zbar_{k}(X)^{\top}]\}^{-1}$ in the training sample for $k < n$. Indeed, the goal of LMR was to determine the inferential questions concerning smooth nonlinear functionals that remain answerable when one refuses to impose any complexity reducing structural assumptions -- a goal that seems to us to be extremely ``methods-driven''.

However, several difficult open problems remain to be solved before HOIF inference becomes fully methods-driven; i.e. becomes a robust, off-the-shelf, widely applicable methodology for inference on non-linear functionals in non- and semi-parametric models. We have previously discussed these remaining problems both in LMR and earlier papers. In this rejoinder we discuss some of them in greater depth to respond to discussants' concerns and suggestions.

\section{Towards ``methods-driven'' HOIF\lowercase{s}} 
\label{sec:issues}
The main bottleneck in achieving fully ``methods-driven'' HOIF inference is the dependence of the power of our falsification tests and the efficiency of our estimators on the choice of the basis functions $\zbar_{k} (x)$. In Section \ref{sec:basis_choice}, we propose a relaxation of one of the assumptions in LMR that dispenses with our reliance on `carefully constructed'  choices (such as compactly supported wavelets or B-splines) for the basis functions $\zbar_{k} (x)$, at the cost of perhaps a small loss in power. 

In their final section, KBW consider one of the most interesting open problem in the theory of HOIF: how to adaptively select the $m$ basis functions $\fbar_{m} (x) = \left( f_{1} (x), \ldots, f_{m} (x) \right)$ of the $d$-dimensional vector $x$ to (approximately) minimize the truncation bias for the expected conditional variance\footnote{Following KBW, we have used $m$ rather than $k$ to indicate the dimension of the vector of basis functions in their statistic $\widehat{\IIFF}_{22, \KBW}$ defined in Section \ref{sec:basis_selection}. Note that by Theorem 3.2 of LMR, we require $m < n$ to have power to reject the null hypothesis under the alternative that $\frac{\Bias_{\theta, k} (\hat{\psi}_{1})}{\se_{\theta} (\hat{\psi}_{1})} = \delta + c$ for any given $c > 0$. Recall that our ability to detect with probability going to 1 any alternative of order $c / n^{1/2}$ for any fixed $c > 0$ is one of the perhaps surprising consequences of our tests based on HOIF (due to the fact that with $k < n$, the variance of $\widehat{\IIFF}_{22, k}$ under that alternative is $k / n^{2}$.)}. We have been investigating this same problem for several years but we have yet to come up with a wholly satisfying approach. KBW suggest a new approach based on aggregation. In Section \ref{sec:basis_selection}, we show by a toy example that this approach seems promising and is worth further in-depth investigation. However, we also raise a difficult problem that needs to be solved before this promise can be fulfilled. 

\subsection{Dispensing with the need for carefully constructed basis functions}
\label{sec:basis_choice}
LMRs assumed Condition W in the statement and proof of Theorem 3.2 and 4.2. Condition W imposes severe restrictions on the basis function $\zbar_{k}$ that can be chosen. Here we show these restrictions can be avoided by replacing Condition W with Condition \ref{cond:w} below. In fact, we mentioned Condition \ref{cond:w} in Remark 2.5 of LMR but failed to provide sufficient emphasis and context. With the exception of the online supplement, following the recommendation of a referee, LMR focus on the semisupervised setting in which $\BE_{g} [\zbar_{k}(X) \zbar_{k}(X)^{\top}]$ is known. In that case we only require the following weakened form of Condition W in LMR for the level and power properties stated in Theorem 3.2 and 4.2 of our test $\widehat{\chi}_{k} (z_{\alpha / 2}, \delta)$ for the surrogate null hypothesis $\H_{0, k} (\delta): \frac{\vert \Bias_{\theta, k} (\hat{\psi}_{1}) \vert}{\se_{\theta} [\hat{\psi}_{1}]} \leq \delta$ to hold, where $\Bias_{\theta, k} (\hat{\psi}_{1}) = \BE_{\theta} [\hat{\psi}_{1} - \tilde{\psi}_{k} (\theta)]$. 

\begin{customthm}{SW}\leavevmode\label{cond:w}
\begin{enumerate}
\item All the eigenvalues of $\Omega_{k}$ are bounded away from 0 and $\infty$;

\item The true nuisance functions $b(X)$ and $p(X)$, and the estimated nuisance functions $\hat{b}(X)$ and $\hat{p}(X)$, are all bounded with $\P_{\theta}$-probability 1;

\item $\Vert \zbar_{k}(x)^{\top} \zbar_{k}(x) \Vert_{\infty} \leq B k$ for some constant $B > 0$.
\end{enumerate}
\end{customthm}

Condition \ref{cond:w} weakens Condition W in LMR by dropping the requirement that $\Vert \Pi [\hat{b} - b | \zbar_{k}] \Vert_{\infty} \leq C$ and $\Vert \Pi [\hat{p} - p | \zbar_{k}] \Vert_{\infty} \leq C$ for some constant $C > 0$ not depending on $n$. This extra condition holds for wavelets, B-spline and local polynomial partition series \citep{belloni2015some}. However, there are many additional choices of  $\zbar_{k}$ that satisfy Condition \ref{cond:w} without satisfying Condition W, including Fourier series and monomial transformations of the covariates $X$ when $X$ is compactly supported or monomial transformations of some bounded transformation of $X$ when $X$ is unbounded. Allowing $\zbar_{k}$ to include monomial transformations of the covariates makes our approach more flexible and ``methods-driven''.

Turn now to the case considered in the supplement of LMR and \cite{liu2020skepticism} in which the expected Gram matrix $\Omega_{k} = \BE_{g} [\zbar_{k}(X) \zbar_{k}(X)^{\top}]$ is unknown and therefore tests of $\H_{0, k} (\delta)$ must now be based on empirical HOIFs that substitute $\widehat{\Omega}_{k} = \mathbb{P}_{n_{tr}} [\zbar_{k}(X) \zbar_{k}(X)^{\top}]$ for $\Omega_{k}$. In Section \ref{app:emp} (also in Section S3 of the supplement of LMR and \citet[Sections 4 and 5]{liu2020skepticism}), we show in that case that the additional conditions $\Vert \Pi [\hat{b} - b | \zbar_{k}] \Vert_{\infty} \leq C$ and $\Vert \Pi [\hat{p} - p | \zbar_{k}] \Vert_{\infty} \leq C$ are needed for tests $\widehat{\chi}_{33, k} (\widehat{\Omega}_{k}^{-1}; z_{\alpha / 2}, \delta)$ (see equation \eqref{eq:test}) that use $\widehat{\Omega}_{k}$ to attain the same asymptotic power as the oracle tests $\widehat{\chi}_{k} (z_{\alpha / 2}, \delta)$ that use $\Omega_{k}$. When these infinity-norm bounds do not hold (e.g. for Fourier series or monomial transformation of compactly-supported covariates), the asymptotic power of the test might be less. However, the level of the test $\widehat{\chi}_{33, k} (\widehat{\Omega}_{k}^{-1}; z_{\alpha / 2}, \delta)$ under Conditions W and \ref{cond:w} are identical under some additional restrictions\footnote{For the additional restrictions, see Proposition \ref{prop:level} and Remark S3.5 of the supplement of LMR. In \citet{liu2020skepticism}, we show that it is possible to remove these \textit{additional restrictions} by extending $U$-statistics of order three to diverging order with increased computational cost.}. Hence, at least in the context of bias testing, we shall relax Condition W to Condition \ref{cond:w} in the future (e.g. \cite{liu2020skepticism}), whether or not $\BE_{g} [\zbar_{k}(X) \zbar_{k}(X)^{\top}]$ is known, so as to remove restriction to ``carefully constructed series''. The only cost is a possible small loss in power and that only if the infinity norms of the projections $\Pi [\hat{b} - b | \zbar_{k}]$ and/or $\Pi [\hat{p} - p | \zbar_{k}]$ are not bounded even though those of $\hat{b} - b$ and $\hat{p} - p$ are bounded under Condition \ref{cond:w}. In summary, by adopting Condition \ref{cond:w} rather than Condition W, our methodology becomes more ``methods-driven'' and allows $\zbar_{k}$ to include most basis functions relevant for practice.

\subsection{KBW's aggregation approach}
\label{sec:basis_selection}
In the following, to avoid irrelevant issues, we consider the case in which the density $g$ of $X$ is known, $\psi (\theta) = \BE_{\theta} [\var_{\theta} [A | X]]$, and our inferential goal is to test the actual null hypothesis $\H_{0} (\delta): \Bias_{\theta} (\hat{\psi}_{1}) \leq \delta \se_{\theta} (\hat{\psi}_{1})$ as in LMR.

KBW propose the following procedure. First divide the data into three randomly selected subsamples: a training sample $\calD_{tr}$, a selection (auxiliary) sample $\calD_{sel}$, and an estimation sample $\calD_{est}$. We are given an estimate $\hat{p} (x)$ of $\BE_{\theta} [A | X = x]$ obtained from $\calD_{tr}$. We then use data $\calD_{sel}$ to regress the residuals $A - \hat{p} (X)$ using $m$ different methods to obtain $\hat{\fbar}(x) = \{\hat{f}_{\ell} (x), \ell = 1, \ldots, m\}$ predictors of the true residual function $p (x) - \hat{p} (x)$. Finally in sample $\calD_{est}$, we compute\footnote{As in LMR, we always condition on $\mathcal{D}_{tr}$, which is therefore suppressed in the notation.} 
\begin{equation} \label{kbw}
\widehat{\IIFF}_{22, \KBW} (\hat{\fbar}_{m}) = \frac{1}{n (n - 1)} \sum_{i_{1} \neq i_{2} \in \calD_{est}} (A_{i_{1}} - \hat{p}(X_{i_{1}})) \hat{\fbar}_{m} (X_{i_{1}})^{\top} \Omega_{\fbar_{m}}^{-1} \hat{\fbar}_{m} (X_{i_{2}}) (A_{i_{2}} - \hat{p}(X_{i_{2}})).
\end{equation}
where $\Omega_{\fbar_{m}} \equiv \BE_{g} [\hat{\fbar}_{m}(X) \hat{\fbar}_{m}(X)^{\top} | \calD_{sel}] = \int \hat{\fbar}_{m}(x) \hat{\fbar}_{m}(x)^{\top} g (x) d x$. KBW test the hypothesis $\H_{0} (\delta)$, using their test statistic
$$
\widehat{\chi}_{\KBW, m} (z_{\alpha / 2}, \delta) = \mathbbm{1} \left\{ \frac{\widehat{\IIFF}_{22, \KBW} (\hat{\fbar}_{m})}{\widehat{\se} (\hat{\psi}_{1})} - z_{\alpha / 2} \frac{\widehat{\se} (\widehat{\IIFF}_{22, \KBW} (\hat{\fbar}_{m}))}{\widehat{\se} (\hat{\psi}_{1})} > \delta \right\}
$$
which is precisely the test statistic $\widehat{\chi}_{k} (z_{\alpha / 2}, \delta)$ in Section 3 of LMR except that $\widehat{\IIFF}_{22, k}$ is replaced by $\widehat{\IIFF}_{22, \KBW} (\hat{\fbar}_{m})$.

Roughly speaking, KBW's idea is that if one among the $m$ ($m < n$) methods both captures the true structure class (e.g. a smoothness class versus a sparsity class) of the true residual function $p (x) - \hat{p} (x)$ and appropriately chosen tuning parameters, then the power of the KBW test $\widehat{\chi}_{\KBW, m} (z_{\alpha / 2}, \delta)$ will be equal to or greater than the power of the LMR test $\widehat{\chi}_{k} (z_{\alpha / 2}, \delta)$ that would be used by an oracle who knew the true structure class and the optimal basis functions $\zbar_{k} (x)$ that maximize the power of $\widehat{\chi}_{k} (z_{\alpha / 2}, \delta)$.

Below we will show how the KBW test performs in a toy example under \Holder{} assumptions. 

\underline{{\bf Example}}: Let $X$ be one-dimensional, $p (x)$ be \Holder{} with exponent $s$ with $s < 1 / 2$ and $\hat{p} (x)$ be a series estimator of $p$ using $\zbar_{k^{\ast}}$, the first $k^{\ast} \asymp n^{1 / (1 + 2s)}$ compactly supported Cohen-Daubechies-Vial (CDV) wavelets (of sufficient order). 

This $k^{\ast}$ is minimax optimal for estimating $p$ under mean squared error loss. With this minimax optimal $\hat{p}$, $\Bias_{\theta} (\hat{\psi}_{1}) = \BE_{\theta} [(p(X) - \hat{p}(X))^{2}] \asymp n^{-2s / (1 + 2s)}$. Since $s < 1 / 2$, $\Bias_{\theta} (\hat{\psi}_{1}) > n^{-1/2}$ and thus $\H_{0} (\delta)$ is false. For any $k$ such that $n > k \gg k^{\ast}$, $\Bias_{\theta, k} (\hat{\psi}_{1}) \equiv \BE_{\theta} [\widehat{\IIFF}_{22, k}] \asymp \Bias_{\theta} (\hat{\psi}_{1})$ (or equivalently $\frac{\TB_{\theta, k} (\hat{\psi}_{1})}{\Bias_{\theta} (\hat{\psi}_{1})} \rightarrow 0$) and by Theorem 3.2 of LMR, the LMR test $\widehat{\chi}_{k} (z_{\alpha / 2}, \delta)$ rejects $\H_{0} (\delta)$ with probability approaching 1 with increasing sample size.


In the same setup, we consider the KBW test $\widehat{\chi}_{\KBW, m} (z_{\alpha / 2}, \delta)$ with $m = 1$ and $\hat{f}$ the estimated regression function of $A - \hat{p}(X)$ on $\zbar_{k} (X)$, with $k^{\ast} \ll k \ll \left( \frac{k^{\ast}{}^{2}}{\sqrt{n}} \wedge n \right)$. Below we argue that the KBW test $\widehat{\chi}_{\KBW, m = 1} (z_{\alpha / 2}, \delta)$ with such $\zbar_{k}$ rejects $\H_{0} (\delta)$ with probability approaching 1. The reason for $k \ll \frac{k^{\ast}{}^{2}}{\sqrt{n}}$ will be explained below. Thus in summary, for every $s \in (0, 1 / 2)$, we can construct a test $\widehat{\chi}_{k} (z_{\alpha / 2}, \delta)$ as in LMR and a test $\widehat{\chi}_{\KBW, m = 1} (z_{\alpha / 2}, \delta)$ as in KBW to reject $\H_{0} (\delta)$ with probability approaching 1.

We now complicate our example by supposing that we only know $s \in (0, 1/2)$ but do not know its true value. Therefore now $\hat{p}$ will be an optimal adaptive estimator of $p$ estimated from the training sample $\calD_{tr}$. Our goal now is to construct an adaptive test that rejects with probability going to 1 whatever the true value $s$ is. The LMR test $\widehat{\chi}_{k} (z_{\alpha / 2}, \delta)$ with $k = n / c$ with any $c > 1$ is adaptive in this sense. However, no KBW test $\widehat{\chi}_{\KBW, m = 1} (z_{\alpha / 2}, \delta)$ can be adaptive. To see this, if $s = 1 / 2 - \epsilon$ for some very small $\epsilon > 0$, then the value of $k$ used for the test $\widehat{\chi}_{\KBW, m = 1} (z_{\alpha / 2}, \delta)$ must be less than $\frac{k^{\ast}{}^{2}}{\sqrt{n}} = n^{\frac{1}{2} + \epsilon}$ with $k^{\ast}$ evaluated at $n^{\frac{1}{1 + 2 (1 / 2 - \epsilon)}}$. On the other hand, if $s = \epsilon$, $k$ must be larger than $k^{\ast} = n^{\frac{1}{1 + 2\epsilon}}$. Hence there exists no single test $\widehat{\chi}_{\KBW, m = 1} (z_{\alpha / 2}, \delta)$ that can adapt. It remains an open question whether there exists an adaptive test $\widehat{\chi}_{\KBW, m} (z_{\alpha / 2}, \delta)$ for any choice of $m$, even in this simple example.

We now justify the above claims. KBW first estimate the regression of $A - \hat{p}(X)$ on $\zbar_{k} (X)$ from the selection sample $\mathcal{D}_{sel}$ as follows: 
\begin{equation*}
\hat{f} (x) = \zbar_{k} (x)^{\top} \hat{\beta}
\end{equation*}
where $\hat{\beta} = \Omega_{k}^{-1} \frac{1}{n} \sum_{i \in \calD_{sel}} \zbar_{k} (X_{i}) (A_{i} - \hat{p} (X_{i}))$. Then KBW compute $\widehat{\IIFF}_{22, \KBW} (\hat{f})$ as in equation \eqref{kbw}.

Recall that by choosing $k \gg k^{\ast}$, $\Bias_{\theta, k} (\hat{\psi}_{1}) \asymp \Bias_{\theta} (\hat{\psi}_{1})$. Following the proof of Theorem 3.2 of LMR, for $\widehat{\chi}_{\KBW, 1} (z_{\alpha / 2}, \delta)$ to reject the null hypothesis $\H_{0} (\delta)$ with probability approaching 1, we need
\begin{enumerate}[label=(\roman*)]
\item the mean of $\widehat{\IIFF}_{22, \KBW} (\hat{f})$ exceeds $n^{-1/2}$: $\BE_{\theta} [\widehat{\IIFF}_{22, \KBW} (\hat{f})] \asymp \frac{\Bias_{\theta, k} (\hat{\psi}_{1})^{2}}{k / n} \gg \se_{\theta} (\hat{\psi}_{1}) \asymp \frac{1}{\sqrt{n}}$; and
\item the mean of $\widehat{\IIFF}_{22, \KBW} (\hat{f})$ dominates its standard error: $\BE_{\theta} [\widehat{\IIFF}_{22, \KBW} (\hat{f})] \gg \se_{\theta} [\widehat{\IIFF}_{22, \KBW} (\hat{f})]$.
\end{enumerate}

In Section \ref{app:order}, we show that $\widehat{\IIFF}_{22, \KBW} (\hat{f})$ has mean of order $\frac{\Bias_{\theta, k} (\hat{\psi}_{1})^{2}}{k / n}$, which dominates its standard error of order $\frac{\Bias_{\theta, k} (\hat{\psi}_{1}) / \sqrt{n}}{k / n}$ when $\H_{0} (\delta)$ is false and $\Bias_{\theta, k} (\hat{\psi}_{1}) \asymp \Bias_{\theta} (\hat{\psi}_{1})$. Hence (ii) should hold. In terms of (i), if $\H_{0} (\delta)$ is false, i.e. $\Bias_{\theta} (\hat{\psi}_{1}) \asymp \frac{k^{\ast}}{n} \gg \frac{1}{\sqrt{n}}$ or equivalently $k^{\ast} \gg \sqrt{n}$ (or equivalently the smoothness index $s < 1 / 2$), there always exists $k \gg k^{\ast}$ also satisfying (i) as (1) $k^{\ast} \ll \frac{k^{\ast}{}^{2}}{\sqrt{n}}$ when $k^{\ast} \gg \sqrt{n}$ and (2) (i) is equivalent to
\begin{align*}
\frac{\Bias_{\theta, k} (\hat{\psi}_{1})^{2}}{k / n} \asymp \frac{\Bias_{\theta} (\hat{\psi}_{1})^{2}}{k / n} \asymp \frac{k^{\ast}{}^{2}}{k n} \gg \frac{1}{\sqrt{n}} \Leftrightarrow k \ll \frac{k^{\ast}{}^{2}}{\sqrt{n}}.
\end{align*}



\section{Classes of functionals and the bias test}
\label{sec:general}
\subsection{Monotone bias class and alternative sample splitting schemes}
\label{sec:mb} 
The analysis of the expected density example described in Section 2 of KBW illustrates our approach in a simple setting and is essentially isomorphic to the analysis of the expected conditional variance $\BE_{\theta} [\var_{\theta} [A | X]]$ when, as in the semi-supervised setting considered in the main text, the marginal distribution of $X$ is known. Both of these functionals are members of our monotone bias class, a class so named because of our claim that the bias of the our second order $U$-statistic estimator was non-increasing in the number of basis functions $k$. KBW show that our claim does not hold if, following \citet{newey2018cross}, one uses two different estimates of the density of $X$ (or of $\BE_{\theta} [A | X]$ in the $\BE_{\theta} [\var_{\theta} [A | X]]$ example) coming from two independent subsamples of the training sample. This is a fact we were well aware of (see Section S.1.1 of the supplement of LMR). 

The multiple training sample splitting nuisance function estimators of \cite{newey2018cross} can sometimes achieve faster convergence rates than the (single training sample) DRML estimators considered in the main text of LMR. In fact, they can even achieve $n^{-1/2}$ rates for estimation of $\BE_{\theta} [\var_{\theta} [A | X]]$ and $\BE_{\theta} [\cov_{\theta} [A, Y | X]]$ under minimal \Holder{} smoothness assumptions \citep{robins2009semiparametric} needed for $\sqrt{n}$-estimation. But this result requires, not only that the \Holder{} class assumptions are true, but also that one uses particular \textit{undersmoothed} nuisance function estimators (such as local polynomials or regression splines) rather than flexible black-box machine learning (such as random forests or deep learning) estimators, that are the motivation for and the subject of our paper. 

Indeed, it was our intention to define DRML estimators to be estimators in which all nuisance functions were estimated with ML algorithms from the same training sample, as this is the current `state of the art'; and, as we emphasized in the Introduction of our paper, our approach is one of being ``in dialogue with current practice and practitioners." In fact, our technical Lemma 2.3 of LMR, which is the Lemma in which we differentiate the bias properties of $\BE_{\theta} [\var_{\theta} [A | X]]$ from those of $\BE_{\theta} [\cov_{\theta} [A, Y | X]]$, is correct as stated, as the Lemma explicitly applies to the case in which a single estimator of $\hat{p}$ of $\BE_{\theta} [A | X]$ is used in the DRML estimator $\hat{\psi}_{1}$ of $\BE_{\theta} [\var_{\theta} [A | X]]$ (as is evident from the paragraph just prior to the Lemma). 

There remain important open problems that we, with other colleagues, and our discussants are considering in ongoing work on the properties of multiple training sample splitting nuisance function estimators: (1) is it possible to develop a general theory of the benefit of multiple training sample splitting, even when black-box machine learning estimators of the nuisance functions are employed and (2) can undersmoothing be automated to directly balance bias with variance for the estimators of the functional of interest? The model selection strategy in \cite{cui2019bias} may be a viable option.

\subsection{Generalization to functionals with the mixed bias property}
\label{sec:dr}
In \citet{liu2020skepticism}, we also show that the methods of LMR and the theory of HOIFs \citep{robins2008higher, robins2016technical} can be extended to the entire class of parameters/functionals with the so-called \textit{mixed bias property} (henceforth called MB functionals), studied by \cite{rotnitzky2019characterization}. This class is a strict superset of the union of two overlapping classes of functionals introduced in \cite{robins2008higher} and \cite{chernozhukov2018learning} respectively:
\begin{definition}[Definition 1 of \protect\cite{rotnitzky2019characterization}]
\label{def:dr}\leavevmode
A parameter/functional $\psi (\theta)$ is an MB functional if, for each $\theta \in \Theta$, there exist $b: x \mapsto b(x) \in \mathcal{B}$ and $p: x \mapsto p(x) \in \mathcal{P}$ such that (i) $\theta = (b, p, \theta_{\backslash (b, p)})$ and $\Theta = \mathcal{B} \times \mathcal{P} \times \Theta_{\backslash (\mathcal{B}, \mathcal{P})}$ and (ii) for any $\theta, \theta'$ 
\begin{equation}
\psi (\theta') - \psi (\theta) + \BE_{\theta} \left[ \IF_{1} (\theta') \right] = \BE_{\theta} \left[ S_{bp} (b(X) - b' (X)) (p(X) - p' (X)) \right]  \label{eq:drbias}
\end{equation}
where $S_{bp} \equiv s_{bp} (O)$ and $o \mapsto s_{bp} (o)$ is a known function that does not depend on $\theta$ or $\theta'$ satisfying either $\P_{\theta} (S_{bp} \geq 0) = 1$ or $\P_{\theta} (S_{bp} \leq 0) = 1$ and $\mathsf{IF}_{1} (\theta)$ is the (first order) influence function of the parameter $\psi (\theta)$.
\end{definition}

\section{On the covariate structure}
\label{sec:structure} 
KBW raise the interesting question of ``how and whether randomness of the covariates and/or smoothness of the covariate density should be relied on in practice''. They suggest that, perhaps, the covariates $X$ should be conditioned on (and thus be regarded as fixed rather than random) in any inferential procedure, whenever study subjects are not randomly sampled from some well-defined population. If this suggestion were followed the benefits of HOIFs may be greatly diminished, but, if so, HOIFs would not be alone. In many areas, it is essential that the covariates $X$ and the outcome $Y$ have a joint distribution $(X, Y)$ -- conformal inference, prediction risk minimization, the bias-corrected lasso, and covariate shift methods being four current examples, at least two of which our discussants have written about. In fact, if the subjects were not randomly sampled, one could equally ask why should we consider $Y$ conditioned on $X = x$ as random, as we have no reason to believe any measure of their association is invariant across studies or populations, especially if non-causal.

So the question is how to inject randomness into an observational study, a question that arises when an investigator wishes to generalize her findings from the observed study subjects to some larger population. For example, an investigator who considers recommending a public health intervention, would hope to have studied subjects that are in some sense representative of the population of potential recipients. The simplest random model allowing generalization is to consider the study subjects as a random sample from some very large (effectively infinite) hypothetical (i.e. fictitious) superpopulation of potential recipients with the superpopulation empirical distribution serving as the target of inference. This is effectively reverse engineering, in the sense that even if one has studied a `convenience sample', one can still hypothesize a superpopulation that is similar, up to sampling variability, to your study subjects. This construction allows the use of ordinary i.i.d. statistical methods, to obtain valid confidence intervals for functionals of the superpopulation empirical distribution [see \cite{robins1988confidence}]. This approach may seem distasteful (or even vacuous) to a purist and not as elegant as de Finetti's subjectivist approach, but we believe it underlies what frequentist analyses that epidemiologists and statisticians are doing daily, perhaps without explicit recognition. Of course, generalization from a convenience sample to an actual, non-hypothetical, population is possible only based on further substantive knowledge. Typical examples of the sort of convenience samples we are thinking of include (i) all members of a HMO admitted for an acute myocardial infarction between Jan 1, 2015 and Dec 31, 2017 or (ii) all workers employed at a particular asbestos mine at any point in the interval 1959 to 1965.

Putting such philosophical matters aside, let us for now assume $X$ is random. In this case, KBW raised the question whether the performance of HOIF tests and estimators is robust to misspecification of a model for the density $g$ of X say in the estimation of $\BE_{\theta} [\var_{\theta} [A | X]]$. The answer is HOIFs are not robust as can be seen from the fact that the estimation bias $\mathbb{E}_{\theta} [\hat{\psi}_{m, k} - \tilde{\psi}_{k} (\theta)] \equiv \mathsf{EB}_{\theta, m, k} (\hat{\psi}_{1}) = O (\Vert \hat{p} - p \Vert^{2} \Vert \hat{g} - g \Vert^{m - 1})$. Thus if $\hat{g}$ is inconsistent with $\Vert \hat{g} - g \Vert > 1$, then $\hat{\psi}_{m, k}$ may have bias even greater than the bias $O (\Vert \hat{p} - p \Vert^{2})$ of the DRML estimator $\hat{\psi}_{1}$. If, following KBW's suggestion, we react to this non-robustness by choosing to make no assumptions whatsoever regarding the density of X, then we need to restrict ourselves to the use of empirical higher order influence functions, as they do not require that we estimate $g$ \citep{mukherjee2017semiparametric}. The following example helps understand their statistical properties.  

Following KBW, suppose the functional of interest is $\psi (\theta) = \mathbb{E}_{\theta} [\mathsf{b}_{1}(X)] = \mathbb{E}_{\theta} [A\mathsf{p}(X)Y]$, where $\psi (\theta )$ is equal to the counterfactual mean of an outcome $Y$ when a binary treatment $A$ is set to $a = 1$ under ignorability of treatment $A$ conditional on the $d$-dimensional covariate $X$. Here $\mathsf{b}_{a}(x) = \mathbb{E}_{\theta} [Y | A = a, X = x]$ and $\mathsf{p} (x) = 1 / \mathbb{E}_{\theta} [A | X = x]$. Suppose we have a correct \text{H\"{o}lder}{} smoothness model for $\mathsf{b}_{1}(x)$ and $\mathsf{p}(x)$ having unknown smoothness exponents $s_{\mathsf{b}}$ and $s_{\mathsf{p}}$ unrestricted except for the requirement that $s / d \equiv (s_{\mathsf{b}} + s_{\mathsf{p}}) / (2d) < 1 / 4$, implying $\psi (\theta)$ cannot be estimated at rate $n^{-1/2}$ \citep{robins2009semiparametric}. Suppose however we make no complexity reducing assumptions about $f_{X}(x)$, the density of $X$. In that case we would need to use an empirical HOIF estimator $\hat{\psi}_{m, k}^{emp} = \psi (\hat{\theta}) + \mathbb{IF}_{m,\tilde{\psi}_{k}}^{emp} (\hat{\theta})$ with $\bar{\mathsf{z}}_{k}(x)$ chosen as appropriate $d$-dimensional compactly supported CDV wavelets for optimal approximation with $k < n_{tr} = n / 2$\footnote{$n_{tr}$ denotes the training sample size.}, so the inverse of empirical Gram matrix $\mathbb{P}_{n_{tr}} [\bar{\mathsf{z}}_{k}(X) \bar{\mathsf{z}}_{k}(X)^{\top}]$ exists with high probability. It follows from a slight modification of Theorem 5 of \citet{mukherjee2017semiparametric} that with $k = n / (\mathsf{log}(n))^{3}$ and $m = \sqrt{\mathsf{log}(n)}$, $\hat{\psi}_{m, k}^{emp}$ has, up to log terms, truncation bias $n^{- 2 s / d}$ and variance $O(1 / n)$ and negligible estimation bias and thus attains (up to log terms), the rate of convergence $n^{- 2 s / d}$ found by \citet{wang2008effect} for $\mathbb{E}_{\theta} [\mathsf{var}_{\theta} [Y | X]]$ under a fixed design. We believe that no other estimator of the counterfactual mean is known to achieve this rate of convergence for all $s_{\mathsf{b}}$ and $s_{\mathsf{p}}$ satisfying $(s_{\mathsf{b}} + s_{\mathsf{p}}) / (2d) < 1/4$ without imposing further assumptions on the density of $X$. Finally, we conjecture the log terms may be eliminated by decreasing the rate at which $k / n$ converges to zero and appropriately adjusting the rate at which $m(n)$ increases with $n$.

Next we turn to KBW's question concerning the performance of HOIF when $X$ is random but we perform inference conditional on $X$. We begin by providing an example in which unconditional and conditional inference are essentially equivalent. \citet{robins2008higher} consider the question of whether, for estimation of a conditional variance, random regressors provide faster rates of convergence than do fixed equal-spaced regressors, and, if so, how? They consider a setting in which $n$ i.i.d.~copies of $(Y, X)$ are observed with $X$ a $d$-dimensional random vector, with density $f (\cdot)$ bounded away from 0 and infinity and absolutely continuous w.r.t.~the uniform measure on the unit cube $[0, 1]^{d}$. The regression function $b(x) = \mathbb{E}_{\theta} [Y | X = x]$ is assumed to lie in a given H\"{o}lder ball with H\"{o}lder exponent $s < 1$. The goal is to estimate $\mathbb{E}_{\theta} [\var_{\theta} [Y | X]]$ under the homoscedastic semiparametric model $\var_{\theta} [Y | X] = \sigma^{2}$. Under this model, the authors construct a simple estimator $\widehat{\sigma}^{2}$ that converges at rate $n^{-\frac{4 s / d}{1 + 4 s / d}}$, when $s / d < 1 / 4$. \citet{shen2019optimal} recently proved this estimator was minimax optimal by proving a matching lower bound\footnote{Suppose we change the model by assuming $s > 1$ but in no other way. In that case, when $s / d < 1 / 4$ no estimator that attains the rate $n^{-\frac{4s / d}{1 + 4s / d}}$ is known; however no lower bound has been proved that would imply such an estimator is impossible.}.

\citet{wang2008effect} and \citet{cai2009variance} earlier proved that if $X_{i}, i = 1, \ldots, n$, are non-random but equally spaced in $[0, 1]^{d}$, the minimax rate of convergence for the estimation of $\sigma^{2}$ is $n^{-2s / d}$ (when $s / d < 1 / 4$) which is slower than $n^{-\frac{4 s / d}{1 + 4 s / d}}$. Thus randomness in $X$ allows for improved convergence rates even though no smoothness assumptions are made regarding $f (\cdot)$. 

To explain how this happens we describe the estimator of \citet{robins2008higher}. The unit cube in $\mathbb{R}^{d}$ is divided into $k = k(n) = n^{\gamma}$, $\gamma > 1$ identical sub-cubes each with edge length $k^{-1 / d}$. For each sub-cube with two or more observations, we randomly select two subjects $i$ and $j$ without replacement. We estimate $\sigma^{2}$ in each such sub-cube by $(Y_{i} - Y_{j})^{2} / 2$. Our estimator $\widehat{\sigma}^{2}$ of $\sigma^{2}$ is the average of the sub-cube-specific estimates $(Y_{i} - Y_{j})^{2} / 2$ over all the sub-cubes with at least two observations. 

A simple probability calculation shows that the number of sub-cubes containing at least two observations is $O_{\P_{\theta}} (n^{2} / k)$ so $\widehat{\sigma}^{2}$ has variance of order $k / n^{2}$ conditional on $\mathbb{X} = (X_{1}, \ldots, X_{n})$. 

To compute the conditional bias, observe that for a given sub-cube with $i$ and $j$ selected $\mathbb{E}_{\theta} [(Y_{i} - Y_{j})^{2} / 2 | \mathbb{X}] = \mathbb{E}_{\theta} [(Y_{i} - Y_{j})^{2} / 2 | X_{i}, X_{j}] = \sigma^{2} + \{b(X_{i}) - b(X_{j})\}^{2} / 2$. However, $\left\vert b(X_{i}) - b(X_{j}) \right\vert = O (\Vert X_{i} - X_{j} \Vert^{s})$ as $s < 1$ and $\Vert X_{i} - X_{j} \Vert = d^{1 / 2} O (k^{-1 / d})$ as $X_{i}$ and $X_{j}$ are in the same sub-cube. It follows that the conditional bias is $\mathbb{E}_{\theta} [\widehat{\sigma}^{2} - \sigma^{2} | \mathbb{X}] = O_{\P_{\theta}} (k^{-2 s / d})$. We next find the $k$ that equates variance and squared bias. Specifically we solve $k / n^{2} = k^{-4 s / d}$ which gives $k = n^{\frac{2}{1 + 4 s / d}}$. Then the rate of convergence at this optimal $k$ is $n^{-\frac{4 s / d}{1 + 4 s / d}}$ conditional on $\mathbb{X}$ with high probability, since $n^{-\frac{4 s / d}{1 + 4 s / d}}$ is $(k / n^{2})^{1/2}$ evaluated at $k = n^{\frac{2}{1 + 4 s / d}}$. But $n^{-\frac{4 s / d}{1 + 4 s / d}}$ is also the unconditional rate of convergence since the conditional bias and conditional variance are of order $O(k^{- 2 s / d})$ and $O (k / n^{2})$ with overwhelming probability\footnote{To see this, define the event $I_{c}$, for any $c > 0$, as
\begin{align*}
I_{c} \coloneqq \left\{ \exists \geq c \frac{n^{2}}{k} \text{ bins with at least two observations of $X$'s} \right\}
\end{align*}
On this event, we have $\left( \hat{\sigma} - \sigma \right)^{2} \lesssim n^{- 4s / (4s + d)}$ with high probability. Furthermore, for $k \ll n^{2}$, there exists $c > 0$ such that $I_{c}$ happens with high probability when the density $f_{X}$ is bounded away from 0 and $\infty$.}.

\citet{robins2008higher} conclude that the random design estimator has better bias control and hence converges faster than the optimal equal-spaced fixed design estimator, because the random design estimator exploits the $O_{\P_{\theta}} (n^{2} / n^{\frac{2}{1 + 4 s / d}})$ random fluctuations for which the $X$'s corresponding to two different observations are only a distance of $O(\{n^{\frac{2}{1 + 4 s / d}}\}^{-1 / d})$ apart. In summary, our calculations conditional on $\mathbb{X}$ indicate that the difference in rates is wholly attributable to the difference between the empirical distribution of the $X_{i}, i = 1, \ldots, n$ in a typical realization under the random design and the empirical distribution in the fixed design. That is, given the realized $\mathbb{X}$, it is of no consequence whether or not it was generated by a random process; all that matters is how the number of bins with at least two observations scales with the total number of bins $k$.

Now suppose that $\var_{\theta} [Y | X = x] = \sigma^{2} (x)$ is heteroscedastic. Then to consider inference conditional on $\mathbb{X}$ we can take $\mathbb{P}_{n_{est}} \{\var_{\theta} [Y | X]\} = n^{-1} \sum_{i \in est} \var_{\theta} [Y | X_{i}]$ rather than $\psi (\theta) = \mathbb{E}_{\theta} [\var_{\theta} [Y | X]]$ as our object of inference, where we have again randomly split the data into an estimation and a training sample and we note if $X$ is random our object is random. In the random $X$ case, $\mathbb{P}_{n_{est}} \{\var_{\theta} [Y | X]\}$ is a $n^{1/2}$-consistent estimator of $\mathbb{E}_{\theta} [\var_{\theta} [Y | X]]$. It is natural to estimate $\mathbb{P}_{n_{est}} \{\var_{\theta} [Y | X]\}$ with either the empirical HOIF estimator $\hat{\psi}_{m, k}^{emp} = \hat{\psi}_{m, k}^{emp} (\widehat{\Omega}_{k}^{tr})$ defined earlier or with $\hat{\psi}_{m, k}^{emp} (\widehat{\Omega}_{k}^{est})$ where $\widehat{\Omega}_{k}^{est} = n^{-1} \sum_{i \in est} \bar{\mathsf{z}}_{k}(X_{i}) \bar{\mathsf{z}}_{k}(X_{i})^{\top}$ replaces $\widehat{\Omega}_{k}^{tr}$ in the estimator\footnote{$\widehat{\mathbb{IF}}_{22,k} (\widehat{\Omega}_{k}^{est})$ and $\hat{\psi}_{m, k}^{emp} (\widehat{\Omega}_{k}^{est})$ were introduced in Section S3 of the supplement of LMR and its properties were studied via simulation.}. Motivated by KBW's questions concerning conditional inference, it is of great interest to us to study the properties, conditional on $\mathbb{X}$, of $\hat{\psi}_{m, k}^{emp} \left(\widehat{\Omega }_{k}^{tr}\right)$ and  $\hat{\psi}_{m, k}^{emp} (\widehat{\Omega}_{k}^{est})$ as estimators of $\mathbb{P}_{n_{est}} \{\mathsf{var}_{\theta} [Y|X]\}$. 

 
\section{On universal inference}
\label{sec:universal} 
Rather than relying on asymptotic theory, \cite{wasserman2020universal} construct an ingenious universal confidence set by inverting concentration inequalities of the log-likelihood ratios, hence valid for any sample size $n$ for models with likelihood functions. This is another interesting contribution by two of the discussants to the large body of assumption-free/-lean inference literature (e.g. \citet{rinaldo2019bootstrapping}). In this section, we compare HOIF inference with universal inference for smooth nonlinear functionals. The take-home message is the following:
\begin{enumerate}
\item Universal inference requires a likelihood function. For undominated nonparametric problems, KBW consider (i) finite $k = k(n)$-dimensional sieves $\calM_{sub, k} = \left\{ \P_{\theta}; \theta \in \Theta_{sub, k} \subset \Theta \right\}$ of increasing dimension $k$, (ii) (under the conditions of Proposition 7 in \citet{wasserman2020universal}) an associated projection map $\theta \mapsto \tilde{\theta}_{\KL, k}$ from $\Theta$ onto $\Theta_{sub, k}$ that is the identity if $\theta \in \Theta_{sub, k}$ and otherwise is the minimizer in $\KL$-divergence between $\theta$ and $\Theta_{sub, k}$, and (iii) a preliminary estimator $\hat{\theta}_{\KL, k}$ for $\tilde{\theta}_{\KL, k}$ from a split sample. When a functional $\psi (\theta)$ is the object of inference, KBW propose to construct a universal confidence interval for the $\KL$-projection parameter $\tilde{\psi}_{\KL, k} (\theta) = \psi (\tilde{\theta}_{\KL} (\theta))$. Universal inference and HOIF inference may choose the same sieve. In that case, under certain laws $\P_{\theta}$, the $\KL$ projection $\tilde{\psi}_{\KL, k} (\theta)$ may be equal to the truncated parameter $\tilde{\psi}_{k} (\theta)$, as we will show below. Without further complexity reducing assumption to quantify the distance between the sieve and the true law $\P_{\theta}$, inference for $\tilde{\psi}_{\KL, k} (\theta)$ in universal inference or $\tilde{\psi}_{k} (\theta)$ in HOIF inference is the best one could hope for.

\item Universal confidence intervals are guaranteed to cover $\tilde{\psi}_{\KL, k} (\theta)$ at the nominal rate for any sample size $n$ under the conditions of Proposition 7 in \citet{wasserman2020universal}. However, we will show that the length of the confidence interval is of order $\Vert \tilde{\theta}_{\KL, k} - \hat{\theta}_{\KL, k} \Vert$\footnote{Here $\Vert \cdot \Vert$ is the $\ell_{2}$ norm of a $k$-dimensional vector.}, which is generically $\gg n^{-1 / 2}$ when $n > k \gg n^{1 / 2}$. In contrast, HOIF Wald confidence intervals for $\tilde{\psi}_{k} (\theta)$ centered at $\hat{\psi}_{m, k}^{emp}$ have length of order $n^{-1 / 2}$ because estimators $\hat{\psi}_{m, k}^{emp}$ of $\tilde{\psi}_{k} (\theta)$ typically have variance of order $1 / n$ for $k < n$ and bias less than $n^{-1/2}$ when $m$ is sufficiently large. When $\tilde{\psi}_{\KL, k} (\theta) = \tilde{\psi}_{k} (\theta)$, it follows that HOIF confidence intervals will be narrower than universal confidence intervals; however, nominal coverage of these HOIF Wald confidence intervals for $\tilde{\psi}_{k} (\theta)$ is guaranteed only in large samples. It is an interesting open problem to construct universal confidence intervals with guaranteed finite sample coverage of optimal length in large samples\footnote{First order influence function based intervals will enjoy the same properties as the higher order intervals when the test of the null hypothesis $\H_{0, k} (\delta)$ fails to reject in large samples.}.
\end{enumerate}

We now explain the above statements. To be concrete, we consider the following data generating process and functional of interest to simplify our analysis. We observe $n$ i.i.d. copies of $(A, X)$, with $X \sim \text{Uniform} ([0, 1])$ and $A \sim N(p(X), 1)$ with $p(x) \in L_{2} ([0, 1])$. The goal is to estimate $\psi (\theta) = \BE_{\theta} [p(X)^{2}]$. Let $\hat{p}(x)$ denote some initial machine learning estimator of the regression function $p(x)$, computed from an independent training sample treated as fixed. Again, we assume that the density of $X$ is known to focus on the important issues.

\subsection{Nonparametric universal inference}
\label{sec:kl}
We choose the following sieve for universal inference: $$\calM_{sub, k} = \left\{ N (p_{\theta_{k}} (X) = \hat{p}(X) + \theta_{k}^{\top} \zbar_{k}(X), 1); \theta_{k} \in \Theta_{sub, k} \right\}$$ for some $k \equiv k(n)$. The $\KL$-divergence between any member in this sieve $\calM_{sub, k}$ and the true law $A \sim N(p(X), 1)$ is
\begin{align*}
\mathsf{KL} (p, p_{\theta_{k}}) = \BE_{\theta} \left[ (p(X) - p_{\theta_{k}} (X))^{2} \right] = \BE_{\theta} \left[ (p(X) - \hat{p} (X) - \theta_{k}^{\top} \zbar_{k} (X) )^{2} \right].
\end{align*}
By definition, $\tilde{\theta}_{\KL, k}$ minimizes $\mathsf{KL} (p, p_{\theta_{k}})$ and hence $\tilde{\theta}_{\KL, k} = - \Omega_{k}^{-1} \BE_{\theta} \left[ \zbar_{k} (X) (\hat{p}(X) - p (X)) \right]$. By Proposition 7 in \cite{wasserman2020universal}, a nominal $1 - \alpha$ universal confidence set always covers $\tilde{\theta}_{\KL, k}$ with probability at least $1 - \alpha$.

Based on the sieve chosen above, $\tilde{\theta}_{\KL, k} = \tilde{\theta}_{k}$ (see Section \ref{sec:review}), and therefore $\tilde{\psi}_{\KL, k} (\theta) \equiv \int p_{\tilde{\theta}_{\KL, k}} (x)^{2} dx \equiv \int p_{\tilde{\theta}_{k}} (x)^{2} dx \equiv \tilde{\psi}_{k} (\theta)$. This is not surprising because the $\KL$-divergence between two normals is a quadratic form. Such isomorphism breaks down if $A \sim \mathsf{Bernoulli}(p(X))$ even with the same perturbation $p_{\theta_{k}}$. However, we can easily restore the isomorphism by replacing $\KL$-divergence with $\chi^{2}$-divergence. See Section \ref{app:kl} for more detail. Thus it will be interesting to generalize \citet[Proposition 7]{wasserman2020universal} from $\KL$ projection to projection based on general $\mathsf{f}$-divergences \citep{csiszar1964informationstheoretische, ali1966general}, which include $\KL$- and $\chi^{2}$-divergences as special cases.

\subsection{On the length of universal vs. HOIF confidence intervals}
\label{sec:length}
In this section, we suppose that $\tilde{\psi}_{\KL, k} (\theta)$ in universal inference and $\tilde{\psi}_{k} (\theta)$ in HOIF inference coincide. It is then natural to compare the length of the confidence intervals for $\tilde{\psi}_{\KL, k} (\theta)$ based on these two approaches.

Universal inference first estimates $\tilde{\theta}_{\KL, k}$ from half of the sample $\calD_{1}$ of size $n / 2$ by $\hat{\theta}_{\KL, k, \calD_{1}}$\footnote{$\hat{\theta}_{\KL, k, \calD_{1}} = - \ \widehat{\Omega}_{\mathcal{D}_{1}, k}^{-1} \BP_{n, \mathcal{D}_{1}} \left[ \zbar_{k} (X) (\hat{p} (X) - A) \right] \text{ with } \widehat{\Omega}_{\mathcal{D}_{1}, k}^{-1} \coloneqq \left\{ \BP_{n, \calD_{1}} \left[ \zbar_{k} (X_{i}) \zbar_{k} (X_{i})^{\top} \right] \right\}^{-1}$ where $\BP_{n, \calD}$ denotes the empirical measure over the sample $\calD$. }. Switching to another half of the sample $\mathcal{D}_{2}$ also of size $n / 2$, universal inference first finds a nominal $1 - \alpha$ confidence set $\widehat{\Theta}_{\mathcal{D}_{2}} (\alpha)$ (see equation \eqref{eq:set}) for $\tilde{\theta}_{\KL, k}$.

To construct a universal confidence interval for the functional $\tilde{\psi}_{\KL, k} (\theta)$ based on the confidence set $\widehat{\Theta}_{\mathcal{D}_{2}} (\alpha)$, \cite{wasserman2020universal} suggest to use the profile universal confidence interval $\widehat{\Psi}_{\calD_{2}}^{\mathsf{profile}} (\alpha)$ or the plug-in universal confidence interval $\widehat{\Psi}_{\calD_{2}}^{\mathsf{plug\text{-}in}} (\alpha)$ (see equations \eqref{eq:profile} or \eqref{eq:plugin}). In Section \ref{app:profile}, we show that both intervals have to contain the ``plug-in'' estimator $\int p_{\hat{\theta}_{\KL, k, \calD_{1}}} (x)^{2} d x$. Combined with the following lemma\footnote{For proof, see Theorem 1 in \citet{low1997nonparametric}.}, a lower bound is obtained on the expected length of both universal confidence intervals:
\begin{lem}
\label{prop:lower}
For any confidence interval $\Psi$ containing $\int p_{\hat{\theta}_{\KL, k, \calD_{1}}} (x)^{2} d x$, if it covers the target parameter $\tilde{\psi}_{\KL, k} (\theta)$ with probability at least $1 - \alpha$, then
\begin{align*}
\BE_{\theta} [L(\Psi) \vert \calD_{1}] & \geq (1 - \alpha) \left\vert \int p_{\hat{\theta}_{\KL, k, \calD_{1}}} (x)^{2} - p_{\tilde{\theta}_{\KL, k}} (x)^{2} d x \right\vert \\
& \equiv (1 - \alpha) \left\vert (\hat{\theta}_{\KL, k, \calD_{1}} - \theta_{\KL, \calD_{1}})^{\top} \Omega_{k} (\hat{\theta}_{\KL, k, \calD_{1}} + \theta_{\KL, \calD_{1}}) \right\vert.
\end{align*}
\end{lem}
Note that the lower bound given in Lemma \ref{prop:lower} is typically of the same order as $\Vert \hat{\theta}_{\KL, k, \calD_{1}} - \tilde{\theta}_{\KL, k} \Vert$. The length of a plug-in or a profile universal confidence interval is hence of order $\Vert \hat{\theta}_{\KL, k, \calD_{1}} - \tilde{\theta}_{\KL, k} \Vert$, which is typically of order $(k / n)^{1 / 2} \gg n^{-1/2}$ when $k \gg n^{1 / 2}$.

In contrast, HOIF estimators unbiasedly estimate $\tilde{\psi}_{k} (\theta) \equiv \tilde{\psi}_{\KL, k} (\theta)$ using a second order $U$-statistic $\hat{\psi}_{2, k}$\footnote{When $g$ is unknown, one needs to use empirical higher order $U$-statistic $\hat{\psi}_{m, k}^{emp}$ to further reduce the bias due to estimating $\Omega_{k} = \int \zbar_{k} (x) \zbar_{k} (x) g(x) dx$; see Section \ref{sec:review} and \citet{mukherjee2017semiparametric}.} with standard error of order $(1 / n)^{1/2} \vee (k / n^{2})^{1 / 2}$. Then a large sample HOIF Wald confidence interval $\hat{\psi}_{2, k} \pm z_{\alpha / 2} \widehat{\se} (\hat{\psi}_{2, k})$ typically has length of order $n^{-1/2}$ even if $k \gg n^{1/2}$ as long as $k < n$. Even if $k > n$, the length (of order $(k / n^{2})^{1 / 2}$) of an HOIF interval is still shorter than that of a universal confidence interval (of order $(k / n)^{1 / 2}$).

\begin{rem}
\label{rem:profile}
In \citet{murphy2000profile}, the authors (MvdV) showed that, under certain regularity conditions, the confidence interval for $\tilde{\psi}_{\KL, k} (\theta)$ based on inverting the profile likelihood ratio test of the hypothesis $\psi = \tilde{\psi}_{\KL, k} (\theta)$ attains nominal coverage in large samples and has length of order $n^{-1/2}$ even when $\tilde{\theta}_{\KL, k}$ cannot be estimated at rate $n^{-1/2}$. In contrast, as just shown, the universal confidence interval will not shrink at rate $n^{-1/2}$ in this setting; yet, if the universal confidence interval uses the MLE under the sieve model as a preliminary estimator of $\tilde{\theta}_{\KL, k}$, then the only difference between MvdV's interval and KBW's universal confidence interval is that for the former the unconditional MLE of $\tilde{\theta}_{\KL, k}$ in the numerator is computed from the same sample as the maximum profile likelihood in the denominator, while in the latter they come from different (split) samples. By examining the proofs in MvdV, one can see that the better rate depends crucially on an asymptotic expansion that exploits the fact that the numerator and denominator come from the same sample. 
\end{rem}

It is yet unclear to us how to reduce the length of a universal confidence interval. On the other hand, we have conjectured in LMR that a nonasymptotic HOIF confidence interval could be constructed by inverting exponential inequalities for $U$-statistics \citep{gine2000exponential, adamczak2006moment} but the theory is very challenging and doing so will necessarily increase the confidence interval's length. It will be interesting to investigate if such a non-asymptotic HOIF interval will still shrink faster than the universal confidence interval.

\begin{rem}[Final remark on regression and machine learning]
We agree with KBW that machine learning is more than prediction. In our paper, we equate ``machine learning'' with statistical prediction in order to connect with the most current use in causal inference. We believe and expect that many other aspects of machine learning, including clustering, density estimation with generative adversarial networks (GAN), dimension reduction/manifold learning, and optimal transport, will play more and more important roles in causal inference.
\end{rem}

\section*{Acknowledgement}
Lin Liu and James M. Robins were supported by the U.S. Office of Naval Research Grant N000141912446, National Institutes of Health (NIH) awards R01 AG057869 and R01 AI127271. Rajarshi Mukherjee’s research was partially supported by NSF grant EAGER-1941419.

\bibliographystyle{imsart-nameyear}
\bibliography{Master}

\begin{thebibliography}{30}

\bibitem[\protect\citeauthoryear{Adamczak}{2006}]{adamczak2006moment}
\begin{barticle}[author]
\bauthor{\bsnm{Adamczak},~\bfnm{Rados{\l}aw}\binits{R.}}
(\byear{2006}).
\btitle{Moment inequalities for U-statistics}.
\bjournal{The Annals of Probability}
\bvolume{34}
\bpages{2288--2314}.
\end{barticle}
\endbibitem

\bibitem[\protect\citeauthoryear{Ali and Silvey}{1966}]{ali1966general}
\begin{barticle}[author]
\bauthor{\bsnm{Ali},~\bfnm{Syed~Mumtaz}\binits{S.~M.}} \AND
  \bauthor{\bsnm{Silvey},~\bfnm{Samuel~D}\binits{S.~D.}}
(\byear{1966}).
\btitle{A general class of coefficients of divergence of one distribution from
  another}.
\bjournal{Journal of the Royal Statistical Society: Series B (Methodological)}
\bvolume{28}
\bpages{131--142}.
\end{barticle}
\endbibitem

\bibitem[\protect\citeauthoryear{Barron and
  Klusowski}{2018}]{barron2018approximation}
\begin{barticle}[author]
\bauthor{\bsnm{Barron},~\bfnm{Andrew~R}\binits{A.~R.}} \AND
  \bauthor{\bsnm{Klusowski},~\bfnm{Jason~M}\binits{J.~M.}}
(\byear{2018}).
\btitle{Approximation and estimation for high-dimensional deep learning
  networks}.
\bjournal{arXiv preprint arXiv:1809.03090}.
\end{barticle}
\endbibitem

\bibitem[\protect\citeauthoryear{Belloni et~al.}{2015}]{belloni2015some}
\begin{barticle}[author]
\bauthor{\bsnm{Belloni},~\bfnm{Alexandre}\binits{A.}},
  \bauthor{\bsnm{Chernozhukov},~\bfnm{Victor}\binits{V.}},
  \bauthor{\bsnm{Chetverikov},~\bfnm{Denis}\binits{D.}} \AND
  \bauthor{\bsnm{Kato},~\bfnm{Kengo}\binits{K.}}
(\byear{2015}).
\btitle{Some new asymptotic theory for least squares series: Pointwise and
  uniform results}.
\bjournal{Journal of Econometrics}
\bvolume{186}
\bpages{345--366}.
\end{barticle}
\endbibitem

\bibitem[\protect\citeauthoryear{Cai, Levine and Wang}{2009}]{cai2009variance}
\begin{barticle}[author]
\bauthor{\bsnm{Cai},~\bfnm{T~Tony}\binits{T.~T.}},
  \bauthor{\bsnm{Levine},~\bfnm{Michael}\binits{M.}} \AND
  \bauthor{\bsnm{Wang},~\bfnm{Lie}\binits{L.}}
(\byear{2009}).
\btitle{Variance function estimation in multivariate nonparametric regression
  with fixed design}.
\bjournal{Journal of Multivariate Analysis}
\bvolume{100}
\bpages{126--136}.
\end{barticle}
\endbibitem

\bibitem[\protect\citeauthoryear{Chen and Christensen}{2013}]{chen2013optimal}
\begin{barticle}[author]
\bauthor{\bsnm{Chen},~\bfnm{Xiaohong}\binits{X.}} \AND
  \bauthor{\bsnm{Christensen},~\bfnm{Timothy}\binits{T.}}
(\byear{2013}).
\btitle{Optimal uniform convergence rates for sieve nonparametric instrumental
  variables regression}.
\bjournal{arXiv preprint arXiv:1311.0412}.
\end{barticle}
\endbibitem

\bibitem[\protect\citeauthoryear{Chernozhukov, Newey and
  Singh}{2018}]{chernozhukov2018learning}
\begin{barticle}[author]
\bauthor{\bsnm{Chernozhukov},~\bfnm{Victor}\binits{V.}},
  \bauthor{\bsnm{Newey},~\bfnm{Whitney~K}\binits{W.~K.}} \AND
  \bauthor{\bsnm{Singh},~\bfnm{Rahul}\binits{R.}}
(\byear{2018}).
\btitle{Learning {L}2 Continuous Regression Functionals via Regularized {R}iesz
  Representers}.
\bjournal{arXiv preprint arXiv:1809.05224}.
\end{barticle}
\endbibitem

\bibitem[\protect\citeauthoryear{Csisz{\'a}r}{1964}]{csiszar1964informationstheoretische}
\begin{barticle}[author]
\bauthor{\bsnm{Csisz{\'a}r},~\bfnm{Imre}\binits{I.}}
(\byear{1964}).
\btitle{Eine informationstheoretische ungleichung und ihre anwendung auf beweis
  der ergodizitaet von markoffschen ketten}.
\bjournal{Magyer Tud. Akad. Mat. Kutato Int. Koezl.}
\bvolume{8}
\bpages{85--108}.
\end{barticle}
\endbibitem

\bibitem[\protect\citeauthoryear{Cui and Tchetgen~Tchetgen}{2019}]{cui2019bias}
\begin{barticle}[author]
\bauthor{\bsnm{Cui},~\bfnm{Yifan}\binits{Y.}} \AND
  \bauthor{\bsnm{Tchetgen~Tchetgen},~\bfnm{Eric}\binits{E.}}
(\byear{2019}).
\btitle{Selective machine learning of doubly robust functionals}.
\bjournal{arXiv preprint arXiv:1911.02029}.
\end{barticle}
\endbibitem

\bibitem[\protect\citeauthoryear{Gin{\'e}, Lata{\l}a and
  Zinn}{2000}]{gine2000exponential}
\begin{bincollection}[author]
\bauthor{\bsnm{Gin{\'e}},~\bfnm{Evarist}\binits{E.}},
  \bauthor{\bsnm{Lata{\l}a},~\bfnm{Rafa{\l}}\binits{R.}} \AND
  \bauthor{\bsnm{Zinn},~\bfnm{Joel}\binits{J.}}
(\byear{2000}).
\btitle{Exponential and moment inequalities for {U}-statistics}.
In \bbooktitle{High Dimensional Probability II}
\bpages{13--38}.
\bpublisher{Springer}.
\end{bincollection}
\endbibitem

\bibitem[\protect\citeauthoryear{Hayakawa and
  Suzuki}{2020}]{hayakawa2020minimax}
\begin{barticle}[author]
\bauthor{\bsnm{Hayakawa},~\bfnm{Satoshi}\binits{S.}} \AND
  \bauthor{\bsnm{Suzuki},~\bfnm{Taiji}\binits{T.}}
(\byear{2020}).
\btitle{On the minimax optimality and superiority of deep neural network
  learning over sparse parameter spaces}.
\bjournal{Neural Networks}
\bvolume{123}
\bpages{343--361}.
\end{barticle}
\endbibitem

\bibitem[\protect\citeauthoryear{Huang}{2003}]{huang2003local}
\begin{barticle}[author]
\bauthor{\bsnm{Huang},~\bfnm{Jianhua~Z}\binits{J.~Z.}}
(\byear{2003}).
\btitle{Local asymptotics for polynomial spline regression}.
\bjournal{The Annals of Statistics}
\bvolume{31}
\bpages{1600--1635}.
\end{barticle}
\endbibitem

\bibitem[\protect\citeauthoryear{Kennedy, Balakrishnan and
  Wasserman}{2020}]{kennedy2020discussion}
\begin{barticle}[author]
\bauthor{\bsnm{Kennedy},~\bfnm{Edward~H}\binits{E.~H.}},
  \bauthor{\bsnm{Balakrishnan},~\bfnm{Sivaraman}\binits{S.}} \AND
  \bauthor{\bsnm{Wasserman},~\bfnm{Larry~A}\binits{L.~A.}}
(\byear{2020}).
\btitle{Discussion of" On nearly assumption-free tests of nominal confidence
  interval coverage for causal parameters estimated by machine learning"}.
\bjournal{arXiv preprint arXiv:2006.09613}.
\end{barticle}
\endbibitem

\bibitem[\protect\citeauthoryear{Liu, Mukherjee and
  Robins}{2020}]{liu2020skepticism}
\begin{btechreport}[author]
\bauthor{\bsnm{Liu},~\bfnm{Lin}\binits{L.}},
  \bauthor{\bsnm{Mukherjee},~\bfnm{Rajarshi}\binits{R.}} \AND
  \bauthor{\bsnm{Robins},~\bfnm{James~M}\binits{J.~M.}}
(\byear{2020}).
\btitle{An assumption-lean skepticism test of inference validity for doubly
  robust functionals}
\btype{Technical Report},
\bpublisher{Available upon request}.
\end{btechreport}
\endbibitem

\bibitem[\protect\citeauthoryear{Liu, Mukherjee and Robins}{To
  appear}]{liu2020nearly}
\begin{barticle}[author]
\bauthor{\bsnm{Liu},~\bfnm{Lin}\binits{L.}},
  \bauthor{\bsnm{Mukherjee},~\bfnm{Rajarshi}\binits{R.}} \AND
  \bauthor{\bsnm{Robins},~\bfnm{James~M}\binits{J.~M.}}
(\byear{To appear}).
\btitle{On nearly assumption-free tests of nominal confidence interval coverage
  for causal parameters estimated by machine learning}.
\bjournal{Statistical Science}.
\end{barticle}
\endbibitem

\bibitem[\protect\citeauthoryear{Low}{1997}]{low1997nonparametric}
\begin{barticle}[author]
\bauthor{\bsnm{Low},~\bfnm{Mark~G}\binits{M.~G.}}
(\byear{1997}).
\btitle{On nonparametric confidence intervals}.
\bjournal{The Annals of Statistics}
\bvolume{25}
\bpages{2547--2554}.
\end{barticle}
\endbibitem

\bibitem[\protect\citeauthoryear{Mukherjee, Newey and
  Robins}{2017}]{mukherjee2017semiparametric}
\begin{barticle}[author]
\bauthor{\bsnm{Mukherjee},~\bfnm{Rajarshi}\binits{R.}},
  \bauthor{\bsnm{Newey},~\bfnm{Whitney~K}\binits{W.~K.}} \AND
  \bauthor{\bsnm{Robins},~\bfnm{James~M}\binits{J.~M.}}
(\byear{2017}).
\btitle{Semiparametric efficient empirical higher order influence function
  estimators}.
\bjournal{arXiv preprint arXiv:1705.07577}.
\end{barticle}
\endbibitem

\bibitem[\protect\citeauthoryear{Murphy and van~der
  Vaart}{2000}]{murphy2000profile}
\begin{barticle}[author]
\bauthor{\bsnm{Murphy},~\bfnm{Susan~A}\binits{S.~A.}} \AND
  \bauthor{\bparticle{van~der} \bsnm{Vaart},~\bfnm{Aad~W}\binits{A.~W.}}
(\byear{2000}).
\btitle{On profile likelihood}.
\bjournal{Journal of the American Statistical Association}
\bvolume{95}
\bpages{449--465}.
\end{barticle}
\endbibitem

\bibitem[\protect\citeauthoryear{Newey and Robins}{2018}]{newey2018cross}
\begin{barticle}[author]
\bauthor{\bsnm{Newey},~\bfnm{Whitney~K}\binits{W.~K.}} \AND
  \bauthor{\bsnm{Robins},~\bfnm{James~M}\binits{J.~M.}}
(\byear{2018}).
\btitle{Cross-fitting and fast remainder rates for semiparametric estimation}.
\bjournal{arXiv preprint arXiv:1801.09138}.
\end{barticle}
\endbibitem

\bibitem[\protect\citeauthoryear{Rinaldo, Wasserman and
  G'Sell}{2019}]{rinaldo2019bootstrapping}
\begin{barticle}[author]
\bauthor{\bsnm{Rinaldo},~\bfnm{Alessandro}\binits{A.}},
  \bauthor{\bsnm{Wasserman},~\bfnm{Larry}\binits{L.}} \AND
  \bauthor{\bsnm{G'Sell},~\bfnm{Max}\binits{M.}}
(\byear{2019}).
\btitle{Bootstrapping and sample splitting for high-dimensional,
  assumption-lean inference}.
\bjournal{The Annals of Statistics}
\bvolume{47}
\bpages{3438--3469}.
\end{barticle}
\endbibitem

\bibitem[\protect\citeauthoryear{Robins}{1988}]{robins1988confidence}
\begin{barticle}[author]
\bauthor{\bsnm{Robins},~\bfnm{James~M}\binits{J.~M.}}
(\byear{1988}).
\btitle{Confidence intervals for causal parameters}.
\bjournal{Statistics in Medicine}
\bvolume{7}
\bpages{773--785}.
\end{barticle}
\endbibitem

\bibitem[\protect\citeauthoryear{Robins et~al.}{2008}]{robins2008higher}
\begin{bincollection}[author]
\bauthor{\bsnm{Robins},~\bfnm{James}\binits{J.}},
  \bauthor{\bsnm{Li},~\bfnm{Lingling}\binits{L.}},
  \bauthor{\bsnm{Tchetgen~Tchetgen},~\bfnm{Eric}\binits{E.}} \AND
  \bauthor{\bparticle{van~der} \bsnm{Vaart},~\bfnm{Aad}\binits{A.}}
(\byear{2008}).
\btitle{Higher order influence functions and minimax estimation of nonlinear
  functionals}.
In \bbooktitle{Probability and Statistics: Essays in Honor of David A.
  Freedman}
\bpages{335--421}.
\bpublisher{Institute of Mathematical Statistics}.
\end{bincollection}
\endbibitem

\bibitem[\protect\citeauthoryear{Robins
  et~al.}{2009}]{robins2009semiparametric}
\begin{barticle}[author]
\bauthor{\bsnm{Robins},~\bfnm{James}\binits{J.}},
  \bauthor{\bsnm{Tchetgen~Tchetgen},~\bfnm{Eric}\binits{E.}},
  \bauthor{\bsnm{Li},~\bfnm{Lingling}\binits{L.}} \AND
  \bauthor{\bparticle{van~der} \bsnm{Vaart},~\bfnm{Aad}\binits{A.}}
(\byear{2009}).
\btitle{Semiparametric minimax rates}.
\bjournal{Electronic Journal of Statistics}
\bvolume{3}
\bpages{1305--1321}.
\end{barticle}
\endbibitem

\bibitem[\protect\citeauthoryear{Robins et~al.}{2016}]{robins2016technical}
\begin{barticle}[author]
\bauthor{\bsnm{Robins},~\bfnm{James}\binits{J.}},
  \bauthor{\bsnm{Li},~\bfnm{Lingling}\binits{L.}},
  \bauthor{\bsnm{Tchetgen~Tchetgen},~\bfnm{Eric}\binits{E.}} \AND
  \bauthor{\bparticle{van~der} \bsnm{Vaart},~\bfnm{Aad}\binits{A.}}
(\byear{2016}).
\btitle{Technical Report: Higher Order Influence Functions and Minimax
  Estimation of Nonlinear Functionals}.
\bjournal{arXiv preprint arXiv:1601.05820}.
\end{barticle}
\endbibitem

\bibitem[\protect\citeauthoryear{Robins et~al.}{2017}]{robins2017minimax}
\begin{barticle}[author]
\bauthor{\bsnm{Robins},~\bfnm{James~M}\binits{J.~M.}},
  \bauthor{\bsnm{Li},~\bfnm{Lingling}\binits{L.}},
  \bauthor{\bsnm{Mukherjee},~\bfnm{Rajarshi}\binits{R.}},
  \bauthor{\bsnm{Tchetgen~Tchetgen},~\bfnm{Eric}\binits{E.}} \AND
  \bauthor{\bparticle{van~der} \bsnm{Vaart},~\bfnm{Aad}\binits{A.}}
(\byear{2017}).
\btitle{Minimax estimation of a functional on a structured high-dimensional
  model}.
\bjournal{The Annals of Statistics}
\bvolume{45}
\bpages{1951--1987}.
\end{barticle}
\endbibitem

\bibitem[\protect\citeauthoryear{Rotnitzky, Smucler and
  Robins}{2019}]{rotnitzky2019characterization}
\begin{barticle}[author]
\bauthor{\bsnm{Rotnitzky},~\bfnm{Andrea}\binits{A.}},
  \bauthor{\bsnm{Smucler},~\bfnm{Ezequiel}\binits{E.}} \AND
  \bauthor{\bsnm{Robins},~\bfnm{James~M}\binits{J.~M.}}
(\byear{2019}).
\btitle{Characterization of parameters with a mixed bias property}.
\bjournal{arXiv preprint arXiv:1904.03725}.
\end{barticle}
\endbibitem

\bibitem[\protect\citeauthoryear{Schmidt-Hieber}{2020}]{schmidt2017nonparametric}
\begin{barticle}[author]
\bauthor{\bsnm{Schmidt-Hieber},~\bfnm{Johannes}\binits{J.}}
(\byear{2020}).
\btitle{Nonparametric regression using deep neural networks with {R}e{LU}
  activation function}.
\bjournal{To Appear in The Annals of Statistics}.
\end{barticle}
\endbibitem

\bibitem[\protect\citeauthoryear{Shen et~al.}{2019}]{shen2019optimal}
\begin{barticle}[author]
\bauthor{\bsnm{Shen},~\bfnm{Yandi}\binits{Y.}},
  \bauthor{\bsnm{Gao},~\bfnm{Chao}\binits{C.}},
  \bauthor{\bsnm{Witten},~\bfnm{Daniela}\binits{D.}} \AND
  \bauthor{\bsnm{Han},~\bfnm{Fang}\binits{F.}}
(\byear{2019}).
\btitle{Optimal estimation of variance in nonparametric regression with random
  design}.
\bjournal{arXiv preprint arXiv:1902.10822}.
\end{barticle}
\endbibitem

\bibitem[\protect\citeauthoryear{Wang et~al.}{2008}]{wang2008effect}
\begin{barticle}[author]
\bauthor{\bsnm{Wang},~\bfnm{Lie}\binits{L.}},
  \bauthor{\bsnm{Brown},~\bfnm{Lawrence~D}\binits{L.~D.}},
  \bauthor{\bsnm{Cai},~\bfnm{T~Tony}\binits{T.~T.}} \AND
  \bauthor{\bsnm{Levine},~\bfnm{Michael}\binits{M.}}
(\byear{2008}).
\btitle{Effect of mean on variance function estimation in nonparametric
  regression}.
\bjournal{The Annals of Statistics}
\bvolume{36}
\bpages{646--664}.
\end{barticle}
\endbibitem

\bibitem[\protect\citeauthoryear{Wasserman, Ramdas and
  Balakrishnan}{2020}]{wasserman2020universal}
\begin{barticle}[author]
\bauthor{\bsnm{Wasserman},~\bfnm{Larry}\binits{L.}},
  \bauthor{\bsnm{Ramdas},~\bfnm{Aaditya}\binits{A.}} \AND
  \bauthor{\bsnm{Balakrishnan},~\bfnm{Sivaraman}\binits{S.}}
(\byear{2020}).
\btitle{Universal inference}.
\bjournal{Proceedings of the National Academy of Sciences}
\bvolume{117}
\bpages{16880--16890}.
\end{barticle}
\endbibitem

\end{thebibliography}

\begin{frontmatter}

\title{Supplement to ``Rejoinder''}


\end{frontmatter}
\setcounter{section}{0}
\section{Tests when $\Omega_{\lowercase{k}}$ is unknown}
\label{app:emp}
In Section S3 of the online supplement of LMR and \citet[Section 4]{liu2020skepticism} we proposed the following test statistic 
\begin{equation} \label{eq:test}
\widehat{\chi}_{33, k} (\widehat{\Omega}_{k}^{-1}; z_{\alpha / 2}, \delta) = \mathbbm{1} \left\{ \frac{\vert \widehat{\IIFF}_{22 \rightarrow 33, k} (\widehat{\Omega}_{k}^{-1}) \vert}{\widehat{\se} [\hat{\psi}_{1}]} - z_{\alpha / 2} \frac{\widehat{\se} [\widehat{\IIFF}_{22 \rightarrow 33, k} (\widehat{\Omega}_{k}^{-1})]}{\widehat{\se} [\hat{\psi}_{1}]} > \delta \right\}
\end{equation}
which involves empirical HOIF estimators at order $m = 2, 3$. Here $\widehat{\se} [\widehat{\IIFF}_{22 \rightarrow 33, k} (\widehat{\Omega}_{k}^{-1})]$ is a consistent estimator of $\se_{\theta} [\widehat{\IIFF}_{22 \rightarrow 33, k} (\widehat{\Omega}_{k}^{-1})]$.

In order to show that the level and power properties of Theorem 3.2 and 4.2 of LMR also hold for $\widehat{\chi}_{33, k} (\widehat{\Omega}_{k}^{-1}; z_{\alpha / 2}, \delta)$, we need the following
\begin{equation}\label{cond}
\Vert \Pi [\hat{b} - b \vert \zbar_{k}] \Vert_{\infty} \leq C, \Vert \Pi [\hat{p} - p \vert \zbar_{k}] \Vert_{\infty} \leq C
\end{equation}
to hold with $\P_{\theta}$-probability 1 for some universal constant $C > 0$, when the residual functions $\hat{b} - b$ and $\hat{p} - p$ are only assumed to be bounded by some universal constant $C' > 0$.

We know the following basis functions that satisfy the above Condition \eqref{cond}: wavelets, B-spline and local polynomial partition series \citep{huang2003local, chen2013optimal, belloni2015some}, because they satisfy the extra condition in Condition W of LMR, but not in Condition \ref{cond:w}: $\Vert \Pi [\hat{b} - b \vert \zbar_{k}] \Vert_{\infty}$ and $\Vert \Pi [\hat{p} - p \vert \zbar_{k}] \Vert_{\infty}$ are $O(1)$ when $\hat{b} - b$ and $\hat{p} - p$ are $O(1)$. This is why we impose the additional assumption in Condition W compared to those in Condition \ref{cond:w}.

Without the additional assumption in Condition W, the level of the test $\widehat{\chi}_{33, k} (\widehat{\Omega}_{k}^{-1}; z_{\alpha / 2}, \delta)$ is still protected under the slightly weaker Condition \ref{cond:w}! In particular, we have:
\begin{proposition}\label{prop:level}
Under Condition \ref{cond:w}, $k \log^{2}(k) \ll n$ and $\Bias_{\theta, k} (\hat{\psi}_{1}) \neq o (\CSBias_{\theta, k} (\hat{\psi}_{1}))$, the test $\widehat{\chi}_{33, k} (\widehat{\Omega}_{k}^{-1}; z_{\alpha / 2}, \delta)$ is a valid asymptotic level $\alpha$ test of the surrogate null hypothesis $\H_{0, k} (\delta)$\footnote{The above result also appeared in \citet[Proposition 4.3]{liu2020skepticism}.}.
\end{proposition}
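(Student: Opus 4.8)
The plan is to show $\limsup_n\sup_{\theta\in\H_{0,k}(\delta)}\P_\theta\bigl(\widehat\chi_{33,k}(\widehat\Omega_k^{-1};z_{\alpha/2},\delta)=1\bigr)\le\alpha$, where on the surrogate null $\H_{0,k}(\delta)$ one has $|\Bias_{\theta,k}(\hat\psi_1)|\le\delta\,\se_\theta[\hat\psi_1]$. Write $\sigma_n:=\se_\theta[\widehat\IIFF_{22\rightarrow 33,k}(\Omega_k^{-1})]$, $\tau_n:=\se_\theta[\hat\psi_1]$, and let $\mu_n$ be the conditional mean of $\widehat\IIFF_{22\rightarrow 33,k}(\Omega_k^{-1})$ given the sample used to fit the nuisances. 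I would establish three claims: (a) $\widehat\IIFF_{22\rightarrow 33,k}(\widehat\Omega_k^{-1})=\mu_n+\sigma_n Z_n+o_{\P_\theta}(\sigma_n)$ with $Z_n$ asymptotically $N(0,1)$; (b) $\widehat\se[\hat\psi_1]=(1+o_{\P_\theta}(1))\tau_n$ and $\widehat\se[\widehat\IIFF_{22\rightarrow 33,k}(\widehat\Omega_k^{-1})]=(1+o_{\P_\theta}(1))\sigma_n$; and (c) $|\mu_n|\le(\delta+o_{\P_\theta}(1))\tau_n$ on $\H_{0,k}(\delta)$. Granting (a)--(c) and the triangle inequality $|\widehat\IIFF_{22\rightarrow 33,k}(\widehat\Omega_k^{-1})|\le|\mu_n|+\sigma_n|Z_n|+o_{\P_\theta}(\sigma_n)$, the argument of the indicator in \eqref{eq:test} is at most $\delta+(\sigma_n/\tau_n)(|Z_n|-z_{\alpha/2})+o_{\P_\theta}(1)$ on $\H_{0,k}(\delta)$; since $\sigma_n/\tau_n\ge 0$, on the event $\{|Z_n|\le z_{\alpha/2}-\epsilon\}$ this is $\le\delta+o_{\P_\theta}(1)$, whence $\P_\theta(\widehat\chi_{33,k}=1)\le\P_\theta(|Z_n|>z_{\alpha/2}-\epsilon)+o(1)$, and letting $\epsilon\downarrow 0$, with all $o$-terms uniform over $\H_{0,k}(\delta)$, yields the claim.

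First I would dispose of the estimated Gram matrix, which is the one genuinely new ingredient relative to LMR's known-$\Omega_k$ analysis. With $\widehat\Omega_k$ the empirical second-moment matrix of $\zbar_k$ on the sample used to form the $U$-statistics, boundedness of the basis under Condition~\ref{cond:w} and a matrix Bernstein inequality give $\Vert\widehat\Omega_k-\Omega_k\Vert_{\mathrm{op}}=O_{\P_\theta}(\sqrt{k\log k/n})$, hence $\Vert\widehat\Omega_k^{-1}-\Omega_k^{-1}\Vert_{\mathrm{op}}=O_{\P_\theta}(\sqrt{k\log k/n})$; under $k\log^2(k)\ll n$ this is $o_{\P_\theta}(1)$ with a spare logarithmic factor, which is the slack needed when the error is propagated through the (at most cubic) multilinear forms defining $\widehat\IIFF_{22\rightarrow 33,k}$. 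Expanding those forms around $\Omega_k^{-1}$ and bounding the remainder by Cauchy--Schwarz together with the $L_2$ norms of $\hat b-b$ and $\hat p-p$ --- where only the assumed $O(1)$ boundedness of these residuals is used, not the extra sup-norm control of Condition~W --- gives $\widehat\IIFF_{22\rightarrow 33,k}(\widehat\Omega_k^{-1})=\widehat\IIFF_{22\rightarrow 33,k}(\Omega_k^{-1})+R_n$ with $R_n=o_{\P_\theta}(\sigma_n)$, and the analogous substitution in the variance estimator, combined with the consistency of $\widehat\se[\widehat\IIFF_{22\rightarrow 33,k}(\widehat\Omega_k^{-1})]$ for its natural target assumed below \eqref{eq:test}, gives the second half of (b).

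Next I would treat $\widehat\IIFF_{22\rightarrow 33,k}(\Omega_k^{-1})$, conditionally on the nuisance-fitting sample, as the sum of a degenerate second-order and a degenerate third-order $U$-statistic plus lower-order pieces. A central limit theorem for degenerate $U$-statistics of growing complexity (via the Hoeffding decomposition and a martingale CLT) gives $(\widehat\IIFF_{22\rightarrow 33,k}(\Omega_k^{-1})-\mu_n)/\sigma_n\to N(0,1)$, the higher-order degenerate moments being negligible under $k\log^2(k)\ll n$ and the $O(1)$ residual bounds; with Step~1 this yields (a), and consistency of $\widehat\se[\hat\psi_1]$ for $\tau_n$ (as in LMR) completes (b). For (c), the defining property of the ``$22\rightarrow 33$'' construction gives $\mu_n=\Bias_{\theta,k}(\hat\psi_1)+r_n$ with $r_n=o(\CSBias_{\theta,k}(\hat\psi_1))$ already under Condition~\ref{cond:w}; this is exactly where the hypothesis $\Bias_{\theta,k}(\hat\psi_1)\neq o(\CSBias_{\theta,k}(\hat\psi_1))$ enters, for it upgrades $r_n=o(\CSBias_{\theta,k}(\hat\psi_1))$ to $r_n=o(|\Bias_{\theta,k}(\hat\psi_1)|)$, so that $\mu_n=\Bias_{\theta,k}(\hat\psi_1)(1+o(1))$ and hence, on $\H_{0,k}(\delta)$, $|\mu_n|\le\delta\tau_n(1+o(1))$.

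The main obstacle will be the quantitative bookkeeping behind Steps~1 and~2 under Condition~\ref{cond:w} alone: showing that the $\widehat\Omega_k$-substitution error $R_n$ and the conditional-mean remainder $r_n$ are of the right order --- $o_{\P_\theta}(\sigma_n)$ and $o(\CSBias_{\theta,k}(\hat\psi_1))$ respectively --- while invoking only the $O(1)$ boundedness of $\hat b-b$ and $\hat p-p$ rather than the sup-norm bounds on $\Pi[\hat b-b\mid\zbar_k]$ and $\Pi[\hat p-p\mid\zbar_k]$, and checking that the various $o_{\P_\theta}$-terms collected in the first paragraph are uniform over $\H_{0,k}(\delta)$. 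This is precisely the estimate for which the extra sup-norm assumption of Condition~W is needed to also pin down the \emph{power} in Theorems~3.2 and~4.2 of LMR; but, as the one-sided triangle-inequality argument above shows, the level is protected even without it.
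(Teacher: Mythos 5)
The paper itself contains no proof of Proposition \ref{prop:level}: it is stated with a footnote deferring to \citet[Proposition 4.3]{liu2020skepticism}, so your attempt can only be measured against the intended argument implicit in the surrounding discussion and in LMR's known-$\Omega_{k}$ analysis. Measured that way, your skeleton --- split the statistic into a centering term plus an asymptotically normal fluctuation, verify consistency of the two standard-error estimators, and bound the centering by $\delta\,\se_{\theta}[\hat{\psi}_{1}]$ on $\H_{0, k}(\delta)$ using $\Bias_{\theta, k}(\hat{\psi}_{1}) \neq o(\CSBias_{\theta, k}(\hat{\psi}_{1}))$ --- is the right one. But there are two problems, one of bookkeeping and one of substance.

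Bookkeeping: you define $\mu_{n}$ as the conditional mean of $\widehat{\IIFF}_{22 \rightarrow 33, k}(\Omega_{k}^{-1})$, the \emph{known}-Gram-matrix statistic; that statistic is exactly conditionally unbiased for $\Bias_{\theta, k}(\hat{\psi}_{1})$ (this is the content of Theorem 2.6 of LMR, invoked in the proof of Lemma \ref{lem:order}), so the remainder $r_{n}$ in your third step is identically zero and the hypothesis $\Bias_{\theta, k}(\hat{\psi}_{1}) \neq o(\CSBias_{\theta, k}(\hat{\psi}_{1}))$ is never genuinely used. The Gram-matrix estimation bias $\mathrm{EB}_{3, k} \equiv \BE_{\theta}[\widehat{\IIFF}_{22 \rightarrow 33, k}(\widehat{\Omega}_{k}^{-1}) - \widehat{\IIFF}_{22 \rightarrow 33, k}(\Omega_{k}^{-1})]$ instead sits inside your $R_{n}$, which you declare to be $o_{\P_{\theta}}(\sigma_{n})$; that requirement is both stronger than what is needed (the systematic part of the substitution error need only be $o(|\Bias_{\theta, k}(\hat{\psi}_{1})|)$ or $o(\se_{\theta}[\hat{\psi}_{1}])$, not $o(\sigma_{n})$) and circular with your third step. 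The correct routing is to carry $\mathrm{EB}_{3, k}$ in the centering term: one shows $\mathrm{EB}_{3, k} = o(\CSBias_{\theta, k}(\hat{\psi}_{1}))$ under $k \log^{2}(k) \ll n$, and the hypothesis $\Bias_{\theta, k}(\hat{\psi}_{1}) \neq o(\CSBias_{\theta, k}(\hat{\psi}_{1}))$ then upgrades this to $o(|\Bias_{\theta, k}(\hat{\psi}_{1})|)$, so that under $\H_{0, k}(\delta)$ the centering is at most $\delta\,\se_{\theta}[\hat{\psi}_{1}](1 + o(1))$. Substance: the bound $\mathrm{EB}_{3, k} = o(\CSBias_{\theta, k}(\hat{\psi}_{1}))$ under Condition \ref{cond:w} alone --- using only $\Vert \hat{b} - b \Vert_{\infty}, \Vert \hat{p} - p \Vert_{\infty} = O(1)$ and \emph{not} the sup-norm control of $\Pi[\hat{b} - b \vert \zbar_{k}]$ and $\Pi[\hat{p} - p \vert \zbar_{k}]$ in \eqref{cond} --- is the entire mathematical content of the proposition, and you assert it (``the defining property of the $22 \rightarrow 33$ construction'') rather than prove it, conceding in your last paragraph that it is the main obstacle. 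As it stands the proposal is a reasonable plan with the decisive estimate missing and the roles of $R_{n}$ and $r_{n}$ interchanged.
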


In terms of power, for the expected conditional covariance type functionals, in \citet{liu2020skepticism}, we showed that the standard error of an $m$-th order influence function estimator, when $m > 2$, has a term of the following order:
$$
\sqrt{\frac{k}{n}} \left( \BE_{\theta} \left[ \Pi [\hat{b} - b \vert \zbar_{k}] (X)^{2} \right] \right)^{1 / 2} \left( \BE_{\theta} \left[ \Pi [\hat{p} - p \vert \zbar_{k}] (X)^{2} \right] \right)^{1 / 2}
$$
which might exceed order of $1 / \sqrt{n}$. This term comes from the linear term in the Hoeffding decomposition of $m$-th order $U$-statistics and can dominate when $k < n$.

In terms of the power of $\widehat{\chi}_{33, k} (\widehat{\Omega}_{k}^{-1}; z_{\alpha / 2}, \delta)$ for surrogate null hypothesis $\H_{0, k} (\delta)$, for the expected conditional covariance, we need
\begin{equation}\label{eq:cond_cov}
\begin{split}
& \Bias_{\theta, k} (\hat{\psi}_{1})^{2} \equiv \left\{ \BE_{\theta} \left[ \Pi [\hat{b} - b \vert \zbar_{k}] (X) \Pi [\hat{p} - p \vert \zbar_{k}] (X) \right] \right\}^{2} \\
& \gg \frac{k}{n} \BE_{\theta} \left[ \Pi [\hat{b} - b \vert \zbar_{k}] (X)^{2} \right] \BE_{\theta} \left[ \Pi [\hat{p} - p \vert \zbar_{k}] (X)^{2} \right]
\end{split}
\end{equation}
to ensure that the rejection probability converges to 1 as $n \rightarrow \infty$. It says that $\Bias_{\theta, k} (\hat{\psi}_{1})$ should be greater in order than $\sqrt{\frac{k}{n}}$ fraction of its Cauchy-Schwarz upper bound for $\widehat{\chi}_{33, k} (\widehat{\Omega}_{k}^{-1}; z_{\alpha / 2}, \delta)$ to reject the surrogate null hypothesis $\H_{0, k} (\delta)$ with probability approaching 1.

For the expected conditional variance, the above requirement reduces to
\begin{equation}\label{eq:cond_var}
\BE_{\theta} \left[ \Pi [\hat{p} - p \vert \zbar_{k}] (X)^{2} \right] \gg \frac{k}{n}.
\end{equation}

In summary, to gather deeper understanding of the power of our test, a natural next step is to characterize the conditions on the (basis) functions $\zbar_{k}$ and the residual functions $\hat{b} - b$ and $\hat{p} - p$, under which Condition \eqref{eq:cond_cov} holds (or Condition \eqref{eq:cond_var} holds for the expected conditional variance).

\section{Technical detail for KBW's aggregation idea}
\label{app:order}
To study the mean and variance we first compute the expectation of $\widehat{\IIFF}_{22, \KBW} (\hat{f})$ given the selection sample\footnote{Note that we always condition on the training sample $\calD_{tr}$}:
\begin{align*}
\BE_{\theta} [\widehat{\IIFF}_{22, \KBW} (\hat{f}) \vert \calD_{sel}] = \left\{ \BE_{\theta} [(A - \hat{p}(X)) \hat{f}(X) \vert \calD_{sel}] \right\}^{2} \left\{ \BE_{\theta} [\hat{f}(X)^{2} \vert \calD_{sel}] \right\}^{-1} = \frac{\Bias_{\theta, k} (\hat{\psi}_{1})^{2} + \Delta_{\text{num}}}{\Bias_{\theta, k} (\hat{\psi}_{1}) + \Delta_{\text{denom}}}
\end{align*}
where
\begin{align*}
\Delta_{\text{num}} & \equiv \left\{ \frac{1}{n} \sum_{i \in \calD_{sel}} (A_{i} - \hat{p} (X_{i})) f(X_{i}) \right\}^{2} - \Bias_{\theta, k} (\hat{\psi}_{1})^{2} \\
\Delta_{\text{denom}} & \equiv \widehat{\IIFF}_{22, k} - \Bias_{\theta, k} (\hat{\psi}_{1}) + \frac{1}{n^{2}} \sum_{i \in \calD_{sel}} (A_{i} - \hat{p} (X_{i}))^{2} \zbar_{k} (X_{i})^{\top} \zbar_{k} (X_{i}).
\end{align*}
Also recall that $\hat{f} (x) = \zbar_{k} (x)^{\top} \hat{\beta}$ where $\hat{\beta} = \frac{1}{n} \Omega_{k}^{-1} \sum_{i \in \calD_{sel}} (A_{i} - \hat{p} (X_{i})) \zbar_{k} (X_{i})$.

We then prove the following:
\begin{lem}\label{lem:order}
Under Condition \ref{cond:w}, when $\Bias_{\theta} (\hat{\psi}_{1}) \gg n^{-1 / 2}$ and $k \gg k^{\ast}$, where $k^{\ast}$ is the minimax optimal choice of $k$ for estimating
\begin{align*}
\BE_{\theta} [\Bias_{\theta, k} (\hat{\psi}_{1})^{2} + \Delta_{\text{num}}] & \lesssim \Bias_{\theta, k} (\hat{\psi}_{1})^{2} + \frac{1}{n}, \var_{\theta} [\Bias_{\theta, k} (\hat{\psi}_{1})^{2} + \Delta_{\text{num}}] \lesssim \frac{1}{n} \left\{ \Bias_{\theta, k} (\hat{\psi}_{1})^{3} \vee \frac{\Bias_{\theta, k} (\hat{\psi}_{1})^{2}}{n} \right\}, \\
\BE_{\theta} [\Bias_{\theta, k} (\hat{\psi}_{1}) + \Delta_{\text{denom}}] & \lesssim \frac{k}{n}, \var_{\theta} [\Bias_{\theta, k} (\hat{\psi}_{1}) + \Delta_{\text{denom}}] \lesssim \frac{1}{n} \left\{ \Bias_{\theta, k} (\hat{\psi}_{1}) \vee \frac{k}{n} \right\}.
\end{align*}
\end{lem}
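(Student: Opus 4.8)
\emph{Proof plan.}
The plan is to put both random quantities into closed form and then bound their first two moments directly, working conditionally on $\calD_{tr}$ throughout (so that $\hat b,\hat p,f$ and $\mu:=\Bias_{\theta,k}(\hat\psi_1)$ are deterministic). First I would record the reduction. Directly from the definition of $\Delta_{\text{num}}$,
\[
\Bias_{\theta,k}(\hat\psi_1)^2+\Delta_{\text{num}}=S_n^2,\qquad S_n:=\tfrac1n\sum_{i\in\calD_{sel}}U_i,\quad U_i:=(A_i-\hat p(X_i))f(X_i),
\]
with the $U_i$ i.i.d.\ given $\calD_{tr}$ and, since $f=\Pi[\,\cdot\mid\zbar_k]$ lies in the span of $\zbar_k$, $\BE_\theta[U_i\mid\calD_{tr}]=\pm\mu$. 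From the definition of $\Delta_{\text{denom}}$ and the displayed identity for $\BE_\theta[\widehat{\IIFF}_{22,\KBW}(\hat f)\mid\calD_{sel}]$, $\Bias_{\theta,k}(\hat\psi_1)+\Delta_{\text{denom}}=\BE_\theta[\hat f(X)^2\mid\calD_{sel}]=\hat\beta^\top\Omega_k\hat\beta=\bar W_n^\top\Omega_k\bar W_n$, where $\Omega_k=\BE_\theta[\zbar_k(X)\zbar_k(X)^\top]$, $\bar W_n:=\tfrac1n\sum_{i\in\calD_{sel}}W_i$, $W_i:=\Omega_k^{-1}(A_i-\hat p(X_i))\zbar_k(X_i)$ i.i.d.\ given $\calD_{tr}$ of mean $\bar\beta$ with $\bar\beta^\top\Omega_k\bar\beta=\mu$. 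The ingredients I would extract from Condition \ref{cond:w} are: $|A-\hat p(X)|\le C$ a.s.; $\|\hat p-p\|_\infty,\|\hat b-b\|_\infty\le C'$, hence $\|f\|_{L^2}\le C'$ and (via $\lambda_{\min}(\Omega_k)\gtrsim 1$) $\|f\|_\infty\lesssim\sqrt k$; $\sup_x\|\zbar_k(x)\|^2\lesssim k$, hence $\sup_x\zbar_k(x)^\top\Omega_k^{-1}\zbar_k(x)\lesssim k$ and $\mathrm{Cov}_\theta(W_1)\preceq C\Omega_k^{-1}$; the residual structure $\var_\theta(U_1\mid\calD_{tr})\lesssim\mu$ (immediate for the expected-conditional-variance functional, where $f=\Pi[\hat p-p\mid\zbar_k]$ and hence $\BE_\theta[U_1^2]\le(\tfrac14+C'^2)\|f\|_{L^2}^2=(\tfrac14+C'^2)\mu$); and $k\lesssim n$ (in force throughout, cf.\ Proposition \ref{prop:level}).

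Next I would compute. For the numerator, expand $S_n^2=\mu^2+2\mu(S_n-\mu)+(S_n-\mu)^2$ and use the sample-mean moment formulas together with $\BE_\theta|U_1-\mu|^4\le\|U_1-\mu\|_\infty^2\,\var_\theta(U_1)\lesssim k\,\var_\theta(U_1)$: this yields $\BE_\theta[S_n^2]=\mu^2+\var_\theta(U_1)/n\lesssim\Bias_{\theta,k}(\hat\psi_1)^2+1/n$ (which needs only $\var_\theta(U_1)\le C$) and, by Minkowski, $\var_\theta(S_n^2)\lesssim\mu^2\var_\theta(U_1)/n+k\,\var_\theta(U_1)/n^3+\var_\theta(U_1)^2/n^2$, each term collapsing into $\mu^3/n\vee\mu^2/n^2$ once $\var_\theta(U_1)\lesssim\mu$, $k\lesssim n$ and $\mu\gg n^{-1/2}$ are used. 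For the denominator, write $\bar W_n=\bar\beta+R_n$ and expand $\bar W_n^\top\Omega_k\bar W_n=\bar\beta^\top\Omega_k\bar\beta+2\bar\beta^\top\Omega_kR_n+R_n^\top\Omega_kR_n$. The trace identity with $\mathrm{Cov}_\theta(W_1)\preceq C\Omega_k^{-1}$ gives $\BE_\theta[R_n^\top\Omega_kR_n]=\tfrac1n\mathrm{tr}(\Omega_k\mathrm{Cov}_\theta(W_1))\le Ck/n$, so $\BE_\theta[\bar W_n^\top\Omega_k\bar W_n]=\Bias_{\theta,k}(\hat\psi_1)+O(k/n)\lesssim k/n$, the last step using $\Bias_{\theta,k}(\hat\psi_1)\lesssim k/n$ in the regime $k\gg k^\ast$. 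For the variance, the cross term contributes $\bar\beta^\top\Omega_k\mathrm{Cov}_\theta(R_n)\Omega_k\bar\beta\le\tfrac Cn\bar\beta^\top\Omega_k\bar\beta=C\mu/n$, while $R_n^\top\Omega_kR_n$ splits into a completely degenerate order-two $U$-statistic of variance $\asymp n^{-2}\mathrm{tr}\big((\Omega_k\mathrm{Cov}_\theta(W_1))^2\big)\lesssim k/n^2$ and a diagonal part $\tfrac1{n^2}\sum_i(W_i-\bar\beta)^\top\Omega_k(W_i-\bar\beta)$ of variance $\tfrac1{n^3}\var_\theta\big((W_1-\bar\beta)^\top\Omega_k(W_1-\bar\beta)\big)\lesssim k^2/n^3$, using $\sup_x\zbar_k(x)^\top\Omega_k^{-1}\zbar_k(x)\lesssim k$ and $\BE_\theta[\zbar_k(X)^\top\Omega_k^{-1}\zbar_k(X)]=k$; since $k\lesssim n$ absorbs $k^2/n^3$ into $k/n^2$ and every cross-covariance among the linear, off-diagonal and diagonal pieces is bounded by the arithmetic mean of the corresponding variances, $\var_\theta[\Bias_{\theta,k}(\hat\psi_1)+\Delta_{\text{denom}}]\lesssim\mu/n+k/n^2=\tfrac1n\{\Bias_{\theta,k}(\hat\psi_1)\vee k/n\}$.

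The hard part will be the variance of the quadratic form $\bar W_n^\top\Omega_k\bar W_n$: showing that its degenerate-$U$-statistic and diagonal pieces enter only at order $k/n^2$ (and not $k^2/n^2$ or worse), which is exactly where the spectral bounds $\mathrm{Cov}_\theta(W_1)\preceq C\Omega_k^{-1}$ and $\sup_x\zbar_k(x)^\top\Omega_k^{-1}\zbar_k(x)\lesssim k$ of Condition \ref{cond:w} do the work, and where one must track carefully which lower-order remainders get absorbed under $k\lesssim n$, $\mu\gg n^{-1/2}$ and $k\gg k^\ast$. A secondary subtlety is the estimate $\var_\theta(U_1\mid\calD_{tr})\lesssim\Bias_{\theta,k}(\hat\psi_1)$, which is what makes the numerator-variance bound scale with the bias rather than with a constant, and which rests on $f$ being a residual function whose $L^2$ norm is of the same order as the bias.
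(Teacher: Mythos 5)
Your proposal is correct and follows essentially the same route as the paper: the paper likewise treats the numerator bounds as a direct sample-mean computation (it calls them ``trivial'' and omits them) and splits $\Bias_{\theta, k} (\hat{\psi}_{1}) + \Delta_{\text{denom}}$ into the off-diagonal $U$-statistic $\widehat{\IIFF}_{22, k}$ plus the diagonal term, the only difference being that the paper cites Theorem 2.6 of LMR for $\var_{\theta} [\widehat{\IIFF}_{22, k}] \lesssim \frac{1}{n} \{ \Bias_{\theta, k} (\hat{\psi}_{1}) \vee \frac{k}{n} \}$ whereas you re-derive that bound from the Hoeffding decomposition of $\bar{W}_{n}^{\top} \Omega_{k} \bar{W}_{n}$ using $\mathrm{Cov}_{\theta}(W_{1}) \preceq C \Omega_{k}^{-1}$. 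One minor point: your order $k^{2} / n^{3}$ for the variance of the diagonal term is the correct one (the paper states $k / n^{3}$, which is inconsistent with Jensen's inequality given $\BE_{\theta} [\zbar_{k} (X)^{\top} \Omega_{k}^{-1} \zbar_{k} (X)] = k$), but since both are absorbed into $k / n^{2}$ under $k \lesssim n$ the conclusion is unaffected.
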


\begin{proof}
The order of $\BE_{\theta} [\Delta_{\text{num}}]$ and $\var_{\theta} [\Delta_{\text{num}}]$ are trivial and hence omitted. For $\Delta_{\text{denom}}$, $\widehat{\IIFF}_{22, k}$ is an unbiased estimator of $\Bias_{\theta, k} (\hat{\psi}_{1})$ and $\var_{\theta} [\widehat{\IIFF}_{22, k}] \lesssim \frac{1}{n} \left\{ \Bias_{\theta, k} (\hat{\psi}_{1}) \vee \frac{k}{n} \right\}$ by Theorem 2.6 of LMR. Finally, it is easy to see
\begin{align*}
\var_{\theta} \left[ \frac{1}{n^{2}} \sum_{i \in \calD_{sel}} (A_{i} - \hat{p} (X_{i}))^{2} \zbar_{k} (X_{i})^{\top} \Omega_{k}^{-1} \zbar_{k} (X_{i}) \right] \lesssim \frac{k}{n^{3}}.
\end{align*}
\end{proof}
Then $\var_{\theta} [\BE_{\theta} [\widehat{\IIFF}_{22, \KBW} (\hat{f}) \vert \calD_{sel}]]$ is typically of order $\frac{1}{n} \frac{\Bias_{\theta, k} (\hat{\psi}_{1})^{2}}{(k / n)^{2}}$, obtained by a Taylor expansion of the ratio of two random variables with standard errors dominated by their means, which is true when $\H_{0} (\delta)$ is false i.e. $\Bias_{\theta} (\hat{\psi}_{1}) \gg \frac{1}{\sqrt{n}}$ and $\Bias_{\theta, k} (\hat{\psi}_{1}) \asymp \Bias_{\theta} (\hat{\psi}_{1})$.

Next we need to compute $\BE_{\theta} [\var_{\theta} [\widehat{\IIFF}_{22, \KBW} (\hat{f}) \vert \calD_{sel}]]$. First
\begin{align*}
\var_{\theta} [\widehat{\IIFF}_{22, \KBW} (\hat{f}) \vert \calD_{sel}] \lesssim \frac{1}{n} \left( \frac{1}{n} \vee \frac{\left( \hat{\beta}^{\top} \Omega_{k} \beta \right)^{2}}{\hat{\beta}^{\top} \Omega_{k} \hat{\beta}} \right) \lesssim \frac{1}{n} \frac{\Bias_{\theta, k} (\hat{\psi}_{1})^{2} + \Delta_{\text{num}}}{\Bias_{\theta, k} (\hat{\psi}_{1}) + \Delta_{\text{denom}}}.
\end{align*}
Hence $\BE_{\theta} [\var_{\theta} [\widehat{\IIFF}_{22, \KBW} (\hat{f}) \vert \calD_{sel}]]$ is typically of order $\frac{\Bias_{\theta, k} (\hat{\psi}_{1})^{2}}{k}$ which is dominated by $\frac{1}{n} \frac{\Bias_{\theta, k} (\hat{\psi}_{1})^{2}}{(k / n)^{2}}$, the order of $\var_{\theta} [\BE_{\theta} [\widehat{\IIFF}_{22, \KBW} (\hat{f}) \vert \calD_{sel}]]$.

By the above heuristic arguments, $\widehat{\IIFF}_{22, \KBW} (\hat{f})$ has mean of order $\frac{\Bias_{\theta, k} (\hat{\psi}_{1})^{2}}{k / n}$, which dominates its standard error of order $\frac{\Bias_{\theta, k} (\hat{\psi}_{1}) / \sqrt{n}}{k / n}$ when $\Bias_{\theta, k} (\hat{\psi}_{1}) \gg n^{-1 / 2}$. 

\section{Technical details for universal inference}
\subsection{$\KL$- and $\chi^{2}$-divergences for $\mathsf{Bernoulli}$ model}
\label{app:kl}
Suppose that $A \in \mathsf{Bernoulli} (p(X))$ and the sieves are $A \sim \mathsf{Bernoulli} (p_{\theta_{k}} (X) = \hat{p}(X) + \theta_{k}^{\top} \zbar_{k}(X))$ for some $k \equiv k(n)$. Then
\begin{align*}
\mathsf{KL} (p, p_{\theta_{k}}) & = \BE_{\theta} \left[ \log \left( \frac{p (X)^{A} (1 - p(X))^{1 - A}}{p_{\theta_{k}} (X)^{A} (1 - p_{\theta_{k}} (X))^{1 - A}} \right) \right] \\
& = \BE_{\theta} \left[ p(X) \log \left( \frac{p(X)}{\hat{p} (X) + \theta_{k}^{\top} \zbar_{k} (X)} \right) + (1 - p(X)) \log \left( \frac{1 - p(X)}{1 - \hat{p} (X) - \theta_{k}^{\top} \zbar_{k} (X)} \right) \right] \\
& = - \ \BE_{\theta} \left[ p (X) \log \left( \hat{p} (X) + \theta_{k}^{\top} \zbar_{k} (X) \right) \right] - \BE_{\theta} \left[ (1 - p (X)) \log \left( 1 - \hat{p} (X) - \theta_{k}^{\top} \zbar_{k} (X) \right) \right] + C.
\end{align*}
Then taking derivative with respect to $\theta_{k}$,
\begin{align*}
0 = \frac{\partial \mathsf{KL} (p, p_{\theta_{k}})}{\partial \theta_{k}} = \BE_{\theta} \left[ \frac{\zbar_{k} (X) \zbar_{k} (X)^{\top} \theta_{k} + \zbar_{k} (X) (\hat{p}(X) - p(X))}{1 - (\hat{p} (X) + \theta_{k}^{\top} \zbar_{k} (X))^{2}} \right]
\end{align*}
has no closed form solution. However, if we consider $\chi^{2}$-divergence between $p$ and $p_{\theta_{k}}$ instead:
\begin{align*}
\chi^{2} (p, p_{\theta_{k}}) = \BE_{\theta} \left[ \left( p(X) - (\hat{p}(X) + \theta_{k}^{\top} \zbar_{k} (X)) \right)^{2} \right]
\end{align*}
which is then of the same form as the $\KL$-divergence when $A \sim N(p(X), 1)$. Hence under $\chi^{2}$-divergence, the isomorphism is restored. 

\subsection{The plug-in universal confidence interval is a subset of the profile universal confidence interval}
\label{app:profile}
\citet[Section 6]{wasserman2020universal} constructs confidence intervals for functional $\psi ((p_{\theta_{k}}, g))$ by the following profile likelihood of the functional $\psi ((p_{\theta_{k}}, g))$
\begin{equation}\label{eq:plugin}
\widehat{\Psi}_{\calD_{2}}^{\mathsf{profile}} (\alpha) = \left\{ \varphi: \frac{\mathcal{L}_{\calD_{2}} (\hat{\theta}_{\KL, k, \calD_{1}})}{\sup_{\theta_{k}: \psi (p_{\theta_{k}}) = \varphi} \mathcal{L}_{\calD_{2}} (\theta_{k})} \leq \frac{1}{\alpha} \right\}
\end{equation}
where $\mathcal{L}_{\calD_{2}} (\theta_{k})$ denotes the joint likelihood of the sample $\calD_{2}$ evaluated at the parameter $\theta_{k}$. They also stated without proof that $\widehat{\Psi}_{\calD_{2}}^{\mathsf{profile}} (\alpha)$ is equivalent to the following plug-in universal confidence interval:
\begin{equation}\label{eq:profile}
\widehat{\Psi}_{\calD_{2}}^{\mathsf{plug\text{-}in}} (\alpha) = \left\{ \varphi: \widehat{\Theta}_{\calD_{2}} (\alpha) \bigcap \psi^{-1} (\varphi) \neq \emptyset \right\}
\end{equation}
where
\begin{equation}
\label{eq:set}
\widehat{\Theta}_{\mathcal{D}_{2}} (\alpha) = \left\{ \theta_{k}: \BP_{n, \calD_{2}} \left\{ A - \hat{p} (X) - \theta_{k}^{\top} \zbar_{k} (X) \right\}^{2} \leq \BP_{n, \calD_{2}} \left\{ A - \hat{p} (X) - \hat{\theta}_{\KL, k, \mathcal{D}_{1}}^{\top} \zbar_{k} (X) \right\}^{2} + \frac{2}{n} \log \left( \frac{1}{\alpha} \right) \right\}.
\end{equation}
However, it should be noted that $\widehat{\Psi}_{\calD_{2}}^{\mathsf{profile}} (\alpha) \equiv \widehat{\Psi}_{\calD_{2}}^{\mathsf{plug\text{-}in}} (\alpha)$ under some regularity conditions, such as the compactness of the domain of $\psi$. Nonetheless, the following is always true.
\begin{lem}
\phantomsection
\label{lem:profile}
$\widehat{\Psi}_{\calD_{2}}^{\mathsf{plug\text{-}in}} (\alpha) \subseteq \widehat{\Psi}_{\calD_{2}}^{\mathsf{profile}} (\alpha)$.
\end{lem}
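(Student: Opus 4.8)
The plan is to reduce the asserted inclusion to a one-line monotonicity comparison, after first rewriting the quadratic constraint that defines $\widehat{\Theta}_{\calD_{2}}(\alpha)$ as a likelihood-ratio constraint.

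First I would invoke the observation from Section~\ref{app:kl} that the universal construction of \citet{wasserman2020universal} is run against the Gaussian working model $A \mid X \sim N(\hat{p}(X) + \theta_{k}^{\top}\zbar_{k}(X), 1)$ (equivalently, the $\chi^{2}$-divergence version, under which the isomorphism is restored). For that model
\begin{equation*}
-2 \log \mathcal{L}_{\calD_{2}}(\theta_{k}) \;=\; n\, \BP_{n, \calD_{2}}\bigl\{ A - \hat{p}(X) - \theta_{k}^{\top} \zbar_{k}(X) \bigr\}^{2} \;+\; n \log(2\pi), \qquad n \equiv |\calD_{2}|,
\end{equation*}
and the additive term does not depend on $\theta_{k}$. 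Taking logarithms in \eqref{eq:set} and clearing the factor $n$ therefore shows that
\begin{equation*}
\widehat{\Theta}_{\calD_{2}}(\alpha) \;=\; \bigl\{ \theta_{k} : \mathcal{L}_{\calD_{2}}(\theta_{k}) \,\ge\, \alpha\, \mathcal{L}_{\calD_{2}}(\hat{\theta}_{\KL, k, \calD_{1}}) \bigr\},
\end{equation*}
i.e. $\widehat{\Theta}_{\calD_{2}}(\alpha)$ is exactly the split-likelihood-ratio universal confidence set for $\theta_{k}$ with training fold $\calD_{1}$; this is the identification \citet{wasserman2020universal} use implicitly. I would also note here that the $\frac{2}{n}\log(1/\alpha)$ slack in \eqref{eq:set} corresponds precisely to the $1/\alpha$ threshold in \eqref{eq:plugin}.

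Second, with this rewriting in hand, I would take an arbitrary $\varphi \in \widehat{\Psi}_{\calD_{2}}^{\mathsf{plug\text{-}in}}(\alpha)$ and, by the definition in \eqref{eq:profile}, pick $\theta_{k}^{\star} \in \widehat{\Theta}_{\calD_{2}}(\alpha) \cap \psi^{-1}(\varphi)$. Since $\psi(p_{\theta_{k}^{\star}}) = \varphi$, the point $\theta_{k}^{\star}$ is feasible for the supremum in the denominator of \eqref{eq:plugin}, so
\begin{equation*}
\sup_{\theta_{k} : \psi(p_{\theta_{k}}) = \varphi} \mathcal{L}_{\calD_{2}}(\theta_{k}) \;\ge\; \mathcal{L}_{\calD_{2}}(\theta_{k}^{\star}) \;\ge\; \alpha\, \mathcal{L}_{\calD_{2}}(\hat{\theta}_{\KL, k, \calD_{1}}),
\end{equation*}
the last inequality being exactly the membership $\theta_{k}^{\star} \in \widehat{\Theta}_{\calD_{2}}(\alpha)$ in the form obtained above. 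Rearranging yields the defining inequality of $\widehat{\Psi}_{\calD_{2}}^{\mathsf{profile}}(\alpha)$ in \eqref{eq:plugin}, hence $\varphi \in \widehat{\Psi}_{\calD_{2}}^{\mathsf{profile}}(\alpha)$, which is the claimed inclusion.

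I expect the only real obstacle to be the first step: getting the likelihood bookkeeping exactly right so that the quadratic residual constraint in \eqref{eq:set} turns into a likelihood ratio, including tracking the slack term and confirming that the relevant working model is the Gaussian/$\chi^{2}$ one of Section~\ref{app:kl} rather than the raw Bernoulli likelihood. Once that identification is made the inclusion is immediate and, crucially, uses nothing about whether the supremum in \eqref{eq:plugin} is attained. By contrast, the reverse inclusion (hence the equality asserted in the surrounding text) is where attainment matters: one would need the supremum over $\{\theta_{k} : \psi(p_{\theta_{k}}) = \varphi\}$ to be achieved by some $\theta_{k}$ that is then automatically in $\widehat{\Theta}_{\calD_{2}}(\alpha)$, and this is precisely where a regularity hypothesis such as compactness of the domain of $\psi$ enters — but it is not needed for the one-sided statement in Lemma~\ref{lem:profile}.
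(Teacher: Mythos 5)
Your proof is correct and follows essentially the same route as the paper's: pick a witness $\theta_{k}^{\star} \in \widehat{\Theta}_{\calD_{2}}(\alpha) \cap \psi^{-1}(\varphi)$ and bound the profile supremum from below by $\mathcal{L}_{\calD_{2}}(\theta_{k}^{\star})$. The paper leaves the identification of the quadratic constraint in \eqref{eq:set} with a likelihood-ratio constraint implicit (and states the quantifier over $\psi^{-1}(\varphi)$ less carefully than you do), so your explicit Gaussian-likelihood bookkeeping is a harmless elaboration of the same argument.
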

\begin{proof}
Choose any $\varphi \in \widehat{\Psi}_{\calD_{2}}^{\mathsf{plug\text{-}in}} (\alpha)$. By definition, for any $\theta_{k}' \in \psi^{-1} (\varphi)$, $\frac{\mathcal{L}_{\calD_{2}} (\hat{\theta}_{\KL, k, \calD_{1}})}{\mathcal{L}_{\calD_{2}} (\theta_{k}')} \leq \frac{1}{\alpha}$. Since $\frac{\mathcal{L}_{\calD_{2}} (\hat{\theta}_{\KL, k, \calD_{1}})}{\mathcal{L}_{\calD_{2}} (\theta_{k}')} \geq \frac{\mathcal{L}_{\calD_{2}} (\hat{\theta}_{\KL, k, \calD_{1}})}{\sup_{\theta_{k}: \psi (p_{\theta_{k}}) = \varphi} \mathcal{L}_{\calD_{2}} (\theta_{k})}$, $\varphi \in \widehat{\Psi}_{\calD_{2}}^{\mathsf{profile}} (\alpha)$. Hence $\widehat{\Psi}_{\calD_{2}}^{\mathsf{plug\text{-}in}} (\alpha) \subseteq \widehat{\Psi}_{\calD_{2}}^{\mathsf{profile}} (\alpha)$.
\end{proof}

In the example of Section \ref{sec:universal}, we have
$$
\widehat{\Psi}_{\calD_{2}}^{\mathsf{plug\text{-}in}} (\alpha) = \left\{ \int p_{\theta_{k}} (x)^{2} d x: \theta_{k} \in \widehat{\Theta}_{\mathcal{D}_{2}} (\alpha) \right\} \subseteq \widehat{\Psi}_{\calD_{2}}^{\mathsf{profile}} (\alpha).
$$
Since $\int p_{\hat{\theta}_{\KL, k, \calD_{1}}} (x)^{2} d x \in \widehat{\Psi}_{\calD_{2}}^{\mathsf{plug\text{-}in}} (\alpha)$, by Lemma \ref{lem:profile}, $\int p_{\hat{\theta}_{\KL, k, \calD_{1}}} (x)^{2} d x \in \widehat{\Psi}_{\calD_{2}}^{\mathsf{profile}} (\alpha)$.

\end{document}